\newtheorem{lemma}{Lemma}
\newtheorem{proposition}{Proposition}
\begin{document}

\title{
Stability analysis of black holes by the $S$-deformation method for coupled systems
}

\author{
Masashi Kimura${}^{1}$
and
Takahiro Tanaka${}^{2,3}$,
}

\affiliation{
${}^{1}$CENTRA, Departamento de F\'{\i}sica, Instituto Superior T\'ecnico, Universidade de Lisboa, Avenida~Rovisco Pais 1, 1049 Lisboa, Portugal
\\
${}^{2}$Department of Physics, Kyoto University, Kyoto 606-8502, Japan
\\
${}^{3}$
Yukawa Institute for Theoretical Physics, Kyoto University, Kyoto 606-8502, Japan
}

\date{\today}
\pacs{04.50.-h,04.70.Bw}
\preprint{KUNS-2735}
\preprint{YITP-18-97}

\begin{abstract}
We propose a simple method to prove the linear mode stability of a black hole
when the perturbed field equations take the form of a system of coupled Schr\"odinger equations.
The linear mode stability of the spacetime is guaranteed by the 
existence of an appropriate $S$-deformation.  
Such an $S$-deformation is related to the Riccati transformation 
of a solution to the Schr\"odinger system with zero energy.
We apply this formalism to some examples and numerically study their stability.
\end{abstract}

\maketitle

\section{Introduction}

To understand the physical properties of black holes, one typically studies their linearised field equations and the motion of test particles. The background spacetime is required to be stable for the perturbative approximation to remain valid. 
Stability is also important for black hole formation
because unstable solutions are not realised as the final states of gravitational collapse.
If the background spacetime is highly symmetric,
such as for a spherically symmetric static black hole, the perturbed field equations can be written as an expansion of mode functions, leading to a system of ordinary differential equations.
When there is only a single degree of freedom, 
or each degree of freedom is decoupled, the perturbed field equations usually 
reduce to a single master equation in the form of the Sch\"odinger equation~\cite{Regge:1957td, Vishveshwara:1970cc, Zerilli:1970se, Zerilli:1971wd, Kodama:2003jz, Ishibashi:2003ap, Kodama:2003kk, Dotti:2004sh, Dotti:2005sq, Gleiser:2005ra, Takahashi:2010ye, Takahashi:2009dz, Takahashi:2009xh}
\begin{align}
\left(-\frac{d^2}{dx^2} + V\right)\Phi = \omega^2 \Phi,
\label{singleschrodingereq}
\end{align}
where we have expressed the time dependence of the perturbed fields in terms of the modes $e^{-i\omega t}$.
The existence of $\omega^2 < 0$ mode for Eq.~\eqref{singleschrodingereq} 
corresponds to an exponentially growing mode.
The $S$-deformation method was used to show the non-existence of such modes in~\cite{Kodama:2003jz, 
Ishibashi:2003ap, Kodama:2003kk, Dotti:2004sh, Dotti:2005sq, Gleiser:2005ra, 
Takahashi:2010ye, Takahashi:2009dz, Takahashi:2009xh, Beroiz:2007gp, Takahashi:2010gz}.
In the $S$-deformation method, the existence of a function $S$ that is continuous everywhere 
and satisfies
$V - S^2 + dS/dx \ge 0$ 
implies the mode stability of the spacetime~\cite{Kodama:2003jz, Ishibashi:2003ap, Kodama:2003kk}.
Recently, it was shown that
we can construct a regular solution of
\begin{align}
V - S^2 + \frac{dS}{dx} = 0,
\end{align}
if the spacetime is stable~\cite{Kimura:2017uor, Kimura:2018eiv}.

With two or more degrees of freedom, the perturbed field equations form a coupled system of ordinary differential equations.
In some cases, the perturbed field equations take the form of 
systems of coupled Schr\"odinger equations~\cite{Sarbach:2001mc, Winstanley:2001bs, Baxter:2015gfa, Molina:2010fb, Nishikawa:2010zg, Cardoso:2018ptl}, 
where the form of the equation is the same as Eq.~\eqref{singleschrodingereq} 
but the potential is a matrix and the wave function has multi components.
In this paper, 
to study the mode stability of such systems, 
we extend the formalism in the previous works~\cite{Kimura:2017uor, Kimura:2018eiv} to coupled systems.
We also discuss the relation between our formalism and the nodal theorem in the pioneering work~\cite{Amann:1995}
on this topic.

This paper is organized as follows.
In Sec.~\ref{sec:2}, we develop a formalism to prove 
stability of the coupled systems based on the $S$-deformation method.
In Sec.~\ref{sec:application}, we apply our formalism to some examples,
and numerically study their stability.
Sec.~\ref{sec:summary} is devoted to summary and discussion.
We also provide several appendices. 
In this paper, we consider that the effective potential is a Hermitian matrix, but
we also show that it can be considered as real symmetric matrix problem in 
Appendix.~\ref{appendix:vermitianvasrealsymm}.
In Appendix.~\ref{appendix:riccatitr}, we discuss basic properties of the Riccati transformation
for a system of coupled Schr\"odinger equations.
In Appendix.~\ref{appendix:compactpotential}, we give a proof of the existence of $S$-deformation
for the case with a compact support potential if the system is stable.
In Appendix.~\ref{appendix:robustness}, we discuss the 
robustness of the $S$-deformation method, {\it i.e.,}
the reason why we can find a regular $S$-deformation without fine-tuning.
In Appendix.~\ref{appendix:positivedefinite}, 
we give a proof of the existence of a regular $S$-deformation for a positive definite effective potential.
In Appendix.~\ref{approximatesolutionsdef}, we give an explicit form of $S$-deformation 
near the horizon for two degrees of freedom case.
In Appendix.~\ref{appendix:zeromode}, we show that the existence of two different $S$-deformations 
implies the non-existence of the zero mode under some assumption.

\section{Simple test for stability of black hole}
\label{sec:2}
\subsection{The $S$-deformation method for coupled systems}
Consider the case where the perturbed field equations take 
the form of a system of coupled Schr\"odinger equations
\begin{align}
-\frac{d^2}{dx^2}\bm{\Phi} + \bm{V}\bm{\Phi}= \omega^2 \bm{\Phi} =: E \bm{\Phi},
\label{multischrodingereq}
\end{align}
where $\bm{V}$ is an $n \times n$ Hermitian matrix~\footnote{
We consider the case where the coupling term in the perturbed Lagrangian 
takes $\bm{\Phi}^\dag \bm{V} \bm{\Phi}$. If $\bm{V}$ is a Hermitian matrix, 
this term is real.
When we consider real symmetric $\bm{V}$ and real $\bm{\Phi}$,
the following discussion in this paper holds
just by taking real parts. 
}, and $\bm{\Phi}$ is an $n$-component vector.
We assume that $n$ also corresponds to the number of physical degrees of freedom.
In this paper, we assume that the domain of $\bm{V}$ is $-\infty < x < \infty$,
and $\bm{V}$ is piecewise continuous and bounded.
For any $n\times n$ matrix ${\bm S}$,
we can show the relation
\begin{align}
-\frac{d}{dx} \left[
\bm{\Phi}^\dag \frac{d\bm{\Phi} }{dx}
+
\bm{\Phi}^\dag \bm{S} \bm{\Phi}
\right]
+
\left(\frac{d\bm{\Phi}^\dag}{dx} + \bm{\Phi}^\dag \bm{S}  \right)
\left(\frac{d\bm{\Phi}}{dx} +\bm{S} \bm{\Phi}   \right)
+ 
\Phi^\dag \left[\bm{V} + \frac{d\bm{S}}{dx} - \bm{S}^2 \right] \Phi = E |\bm{\Phi}|^2,
\notag
\end{align}
where $\dag$ denotes the Hermitian conjugate.
If ${\bm S}$ is Hermitian and its components are continuous functions,\footnote{
We also assume that
$d\bm{S}/dx$ can be defined piecewise continuously, 
and $d\bm{S}/dx$ is bounded at the discontinuity points.}
the equation
\begin{align}
& -
\left[
\bm{\Phi}^\dag\frac{d\bm{\Phi}}{dx} + \bm{\Phi}^\dag \bm{S}\bm{\Phi}
\right]_{- \infty}^{\infty}
+
\int dx
\left[
\left|\frac{d\bm{\Phi}}{dx} + \bm{S}\bm{\Phi}\right|^2 + 
\bm{\Phi}^\dag\left(\bm{V} + \frac{d\bm{S}}{dx} - \bm{S}^2 \right)\bm{\Phi} 
\right]
=
E \int dx\left|\bm{\Phi} \right|^2,
\label{multisdef1}
\end{align}
holds. We consider the boundary condition such that the boundary term in Eq.~\eqref{multisdef1} vanishes.
We can see that the deformed potential for the coupled system is 
\begin{align}
\tilde{\bm{V}} := \bm{V} + \frac{d\bm{S}}{dx} - \bm{S}^2.
\label{eq:vsdeformation}
\end{align}
This is an extension of $S$-deformation method~\cite{Kodama:2003jz, Ishibashi:2003ap, Kodama:2003kk},
and we also refer to $\tilde{\bm{V}}$ 
as an $S$-deformation of the potential $\bm{V}$ in this paper.
If there exists a continuous $\bm{S}$ which gives $\tilde{\bm{V}} = \bm{K}^\dag \bm{K}$, 
since $\bm{\Phi}^\dag \tilde{\bm{V}} \bm{\Phi} = |\bm{K}\bm{\Phi}|^2 \ge 0$,
we can say the non-existence of the negative energy bound state, 
{\it i.e.}, the non-existence of the exponentially growing mode in time.
In~\cite{Aybat:2010sn, Kimura:2018nxk}, this method was used for stability analysis.

\subsection{stability of the coupled system}
Similarly to the single mode case~\cite{Kimura:2017uor, Kimura:2018eiv}, we consider the condition $\tilde{\bm{V}} = 0$, {\it i.e.,}
\begin{align}
\bm{V} + \frac{d\bm{S}}{dx} - \bm{S}^2 = 0,
\label{sdefeqmulti}
\end{align}
for the coupled system.
The existence of a continuous $\bm{S}$ as a solution of Eq.~\eqref{sdefeqmulti}
is a sufficient condition for the stability of the spacetime from Eq.~\eqref{multisdef1}.

According to the nodal theorem in~\cite{Amann:1995},\footnote{
Note that this theorem holds when $\bm{V}$ is Hermitian 
while a real symmetric potential is assumed in~\cite{Amann:1995}.
This is because the Hermitian case can be considered as a real symmetric problem as shown in 
Appendix.~\ref{appendix:vermitianvasrealsymm}.}
for a set of the solutions $\{\bm{\Phi}_i\}$ ($i = 1,2,\ldots n$) 
of the Schr\"odinger equation with $E = 0$ 
with the boundary condition such that
$\bm{\Phi}_i|_{x = L} = 0$ and $(d\bm{\Phi}_i/dx)|_{x = L} = \bm{v}_i$, 
where $\bm{v}_i$ are linearly independent constant vectors,
the necessary and sufficient condition for the non-existence of the negative energy bound state 
for Eq.~\eqref{multischrodingereq}
is that $\det(\bm{Y})$
does not have a zero except at $x = L$ 
if 
$L$ is sufficiently large (or sufficiently large negative),
where $\bm{Y}$ is defined as
\begin{align}
\bm{Y}:= (\bm{\Phi}_1, \bm{\Phi}_2, \ldots, \bm{\Phi}_n).
\end{align}
In~\cite{Winstanley:2001bs, Garaud:2007ti, Garaud:2010ng, Baxter:2015gfa, 
Chen:2017diy}, the nodal theorem~\cite{Amann:1995} was used for stability analysis.

If we assume that this statement holds even when 
we take the limit $L \to \infty$ (or $L \to -\infty$),
{\it i.e.,} 
when we take $\{\bm{\Phi}_i\}$ as a set of 
$n$-decaying modes at $x \to -\infty$ (or $x \to \infty$), 
 $\det(\bm{Y}) \neq 0$ except at infinity when the spacetime is stable, and 
$\bm{Y}^{-1}$ does not diverge at any finite point.
In this case, we can construct a continuous $\bm{S}$ as a Riccati transformation 
by 
\begin{align}
\bm{S}:= -\frac{d\bm{Y}}{dx}\bm{Y}^{-1}.
\label{sfromyzeroenergy}
\end{align}
We can easily check that $\bm{S}$ satisfies Eq.~\eqref{sdefeqmulti} and 
becomes a Hermitian matrix
(see Appendix.~\ref{appendix:riccatitr} for
basic properties of the Riccati transformation 
of solutions to systems of coupled Schr\"odinger equations).
This discussion suggests that 
there exists 
a continuous solution $\bm{S}$ for Eq.~\eqref{sdefeqmulti} 
if there does not exist a bound state with $E \le 0$.
In Appendix~\ref{appendix:compactpotential}, we show that 
this is correct for the case with compact support potential
(or rapidly decaying potential at $x \to \pm \infty$).
Also, in a single mode case, a proof was given under weaker conditions~\cite{Kimura:2018eiv}.

We introduce $\bm{Y}_L$ and $\bm{Y}_R$ to denote $\bm{Y}$ which 
are constructed from linearly independent decaying (or constant) modes at $x \to -\infty$
and $x \to \infty$, respectively,
and the corresponding $\bm{S}$ as $\bm{S}_L = -(d\bm{Y}_L/dx)\bm{Y}_L$ 
and $\bm{S}_R = -(d\bm{Y}_R/dx)\bm{Y}_R$, respectively.
As discussed in Appendix~\ref{appendix:robustness},
we can show that
the general regular $\bm{S}$ for the compact support potential satisfies the property that
all eigenvalues of $\bm{S}_R - \bm{S}$ and $\bm{S} - \bm{S}_L$ are non-negative.
Also, if we solve Eq.~\eqref{sdefeqmulti} with the boundary conditions that
guarantee $\bm{S}$ to be Hermitian
and all eigenvalues of $\bm{S}_R - \bm{S}$ and $\bm{S} - \bm{S}_L$ to be non-negative,
the solution becomes regular for a stable spacetime.
We expect that these properties hold even for non-compact potentials with sufficiently 
rapid convergence at $x \to \pm \infty$.

In practice, if the potential $\bm{V}$ is positive definite at large $x$, 
as discussed in Appendix.~\ref{appendix:positivedefinite} 
where we show the existence of a regular $S$-deformation for a positive definite potential,
we conjecture that $\bm{S} = 0$ at a large $x$ is
an appropriate boundary condition in solving Eq.~\eqref{sdefeqmulti} to obtain a regular $\bm{S}$.

\subsection{$\bm{S}$ is bounded if ${\rm Tr}(\bm{S})$ and $\bm{V}$ are bounded}
\label{sec:boundedtrs}

When $\bm{S}$ is divergent at a point, one of its eigenvalues is also divergent there.
Let $\bm{e}$ be a unit eigenvector of $\bm{S}$ with the eigenvalue $\lambda$.
Then, $\lambda$ satisfies the equation
\begin{align}
\frac{d\lambda}{dx} = \lambda^2 - \bm{e}^\dag \bm{V} \bm{e}.
\end{align}
Note that $\bm{e}$ is not a constant vector.
For a bounded $\bm{V}$, $\lambda$ can be divergent only when $d\lambda/dx \simeq \lambda^2$.
Since the solution of $d\lambda/dx = \lambda^2$ is $\lambda = -1/(x-c)$, 
the eigenvalue can be divergent only at a finite point.
Thus, if $\bm{V}$ are bounded, $\bm{S}$ can be divergent only at a finite point.

If ${\rm Tr} \bm{S}$ is bounded above and below, 
$\det (\bm{Y})$ does not have zero except at infinity
since $ d (\ln \det (\bm{Y}) )/dx = {\rm Tr}(\bm{Y}^{-1}d\bm{Y}/dx) = {\rm Tr}(\bm{S})$.
This implies that $\bm{S}$ is regular if ${\rm Tr} \bm{S}$ is bounded above and below.

\section{Application: numerical calculations for coupled systems}
\label{sec:application}
In this section, we consider to solve Eq.~\eqref{sdefeqmulti} numerically and
find a regular $S$-deformation of a coupled system.
As examples,
we apply our formalism to the systems discussed in~\cite{Molina:2010fb, Nishikawa:2010zg}.

\subsection{Schwarzschild black hole in dynamical Chern-Simons gravity}
\label{sec:dcs}
The odd parity metric perturbations coupled to a perturbed massless scalar field in 
dynamical Chern-Simons gravity~\footnote{
The action of this theory is 
$S = \int d^4 x \sqrt{-g}[
\kappa R - (\vartheta/4)R^{\mu \nu \rho \sigma}{}^{\ast}\!R_{\mu \nu \rho \sigma}
-(\beta/2)(\partial_\mu \vartheta \partial^\mu \vartheta)
]$, where $\kappa$ is the gravitational constant, and $\beta$ is the coupling constant~\cite{Molina:2010fb, Kimura:2018nxk}.
}
around the Schwarzschild black hole 
\begin{align}
ds^2 = -fdt^2 + f^{-1}dr^2 + r^2 (d\theta^2 + \sin^2\theta d\phi^2),
\end{align}
with $f = 1-2M/r$
reduce to the master equations~\cite{Molina:2010fb, Kimura:2018nxk}
\begin{align}
-\frac{d^2}{dx^2} \Phi_1 + V_{11}\Phi_1 + V_{12}\Phi_2 &= E \Phi_1,
\label{mastereqdcs1}
\\
-\frac{d^2}{dx^2} \Phi_2 + V_{12}\Phi_1 + V_{22}\Phi_2 &= E \Phi_2,
\label{mastereqdcs2}
\end{align}
where $d/dx = fd/dr$.\footnote{
The relation among $\Phi_1, \Phi_2$ and perturbed quantities can be seen in~\cite{Molina:2010fb, Kimura:2018nxk}.
}
The effective potentials are given by
\begin{align}
V_{11} &= f\left[\frac{\ell(\ell+1)}{r^2} - \frac{6M}{r^3}\right],
\\
V_{12} &= f \frac{24 M \sqrt{\pi (\ell+2)(\ell + 1)\ell (\ell-1)}}{\sqrt{\beta}r^5},
\\
V_{22} &= f \left[
\frac{\ell(\ell + 1)}{r^2}
\left(1 + \frac{576\pi M^2}{\beta r^6}\right)
+
\frac{2 M}{r^3}
\right],
\end{align}
where $\beta$ is the coupling constant in dynamical Chern-Simons gravity.
We note that the $\beta \to \infty$ limit corresponds to the general relativity case.
Recently, the stability of this system was proven analytically for $\beta \ge 0$~\cite{Kimura:2018nxk}.
We apply our formalism to this system as a test problem.

\begin{figure}[thbp]
\includegraphics[width=0.98\linewidth,clip]{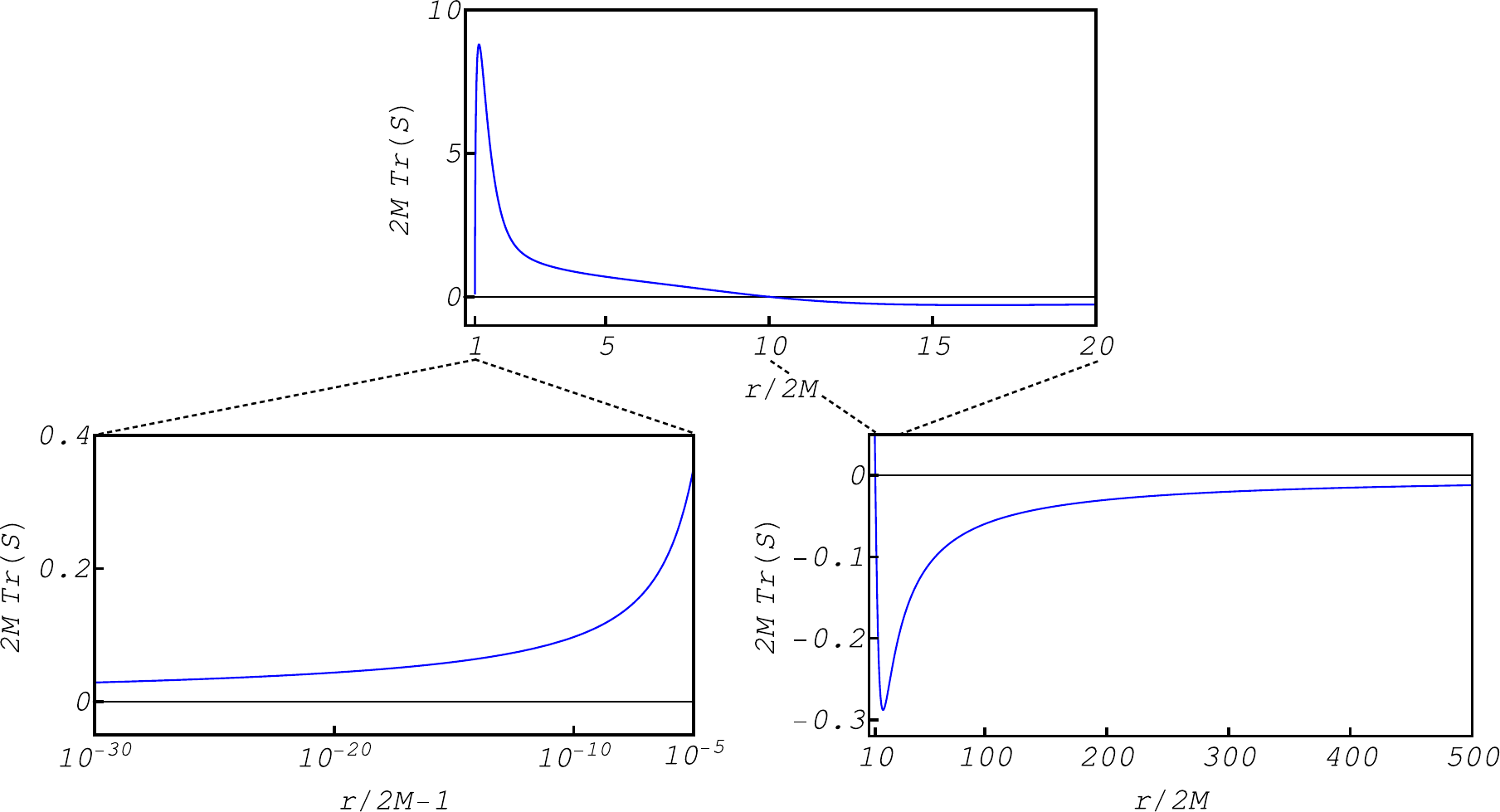}
 \caption{
The numerical solution ${\rm Tr}({\bm{S}})$ for $\ell =2, \beta M^4 = 1/10$ 
with the boundary condition $r_{\rm ini}/(2M) = 10$ and $\bm{S}|_{r_{\rm ini}} = 0$.
The figures are the profile of ${\rm Tr}({\bm{S}})$ (upper),
near horizon region (lower left), far region (lower right).
}
\label{figSdefdcs}
\end{figure}
Since $\bm{V}$ is real symmetric, we consider a real symmetric $\bm{S}$,
then Eq.~\eqref{sdefeqmulti} becomes
\begin{align}
f\frac{d S_{11}}{dr} &= S_{11}^2 + S_{12}^2  - V_{11},
\label{s11eq}
\\
f\frac{d S_{12}}{dr} &= S_{12}(S_{11} + S_{22}) - V_{12},
\label{s12eq}
\\
f\frac{d S_{22}}{dr} &= S_{22}^2 +S_{12}^2 - V_{22}.
\label{s22eq}
\end{align}
We solve these equations numerically,\footnote{
We used the function {\tt NDSolve} in {\it Mathematica} for the numerical calculation
and set the parameter {\tt WorkingPrecision} to $30$.
} and 
we plot ${\rm Tr}(\bm{S}) = S_{11} + S_{22}$ in Fig.~\ref{figSdefdcs}
for $\ell = 2, \beta M^4= 1/10$ case.
We adopt the boundary condition such that $\bm{S} = 0$ at $r/(2M) = r_{\rm ini}/(2M) = 10$,
and solve the equations in the domain 
$r_L \le r \le r_R$, with $r_L/(2M) = 1 + 10^{-30}$ and $ r_R/(2 M) =  500$.
Since $r$ is not an appropriate coordinate to  numerically solve the equations~\eqref{s11eq}-\eqref{s22eq}
in the near horizon region $r_L \le r \le r_{\rm ini}$, 
we use an alternative coordinate $\bar{x} = 2 M \ln(r/(2M) -1)$, then 
$d/d\bar{x} = (r/(2M)-1)d/dr$ and $r = 2 M(1 + e^{\bar{x}/(2M)})$.
Figure~\ref{figSdefdcs} shows that ${\rm Tr}(\bm{S})$ is continuous and bounded,
and this implies that $\bm{S}$ is regular (see Sec.~\ref{sec:boundedtrs}), {\it i.e.,} the spacetime is stable against the $\ell = 2$ mode
perturbation in $\beta M^4= 1/10$ case.
We also report that we can find a regular $\bm{S}$ 
even if we change $r_{\rm ini}$ to other values, {\it e.g.,} $r_{\rm ini}/(2M) = 50, 100, 200$,
like the single mode case~\cite{Kimura:2017uor}.

\begin{figure}[thbp]
\includegraphics[width=0.45\linewidth,clip]{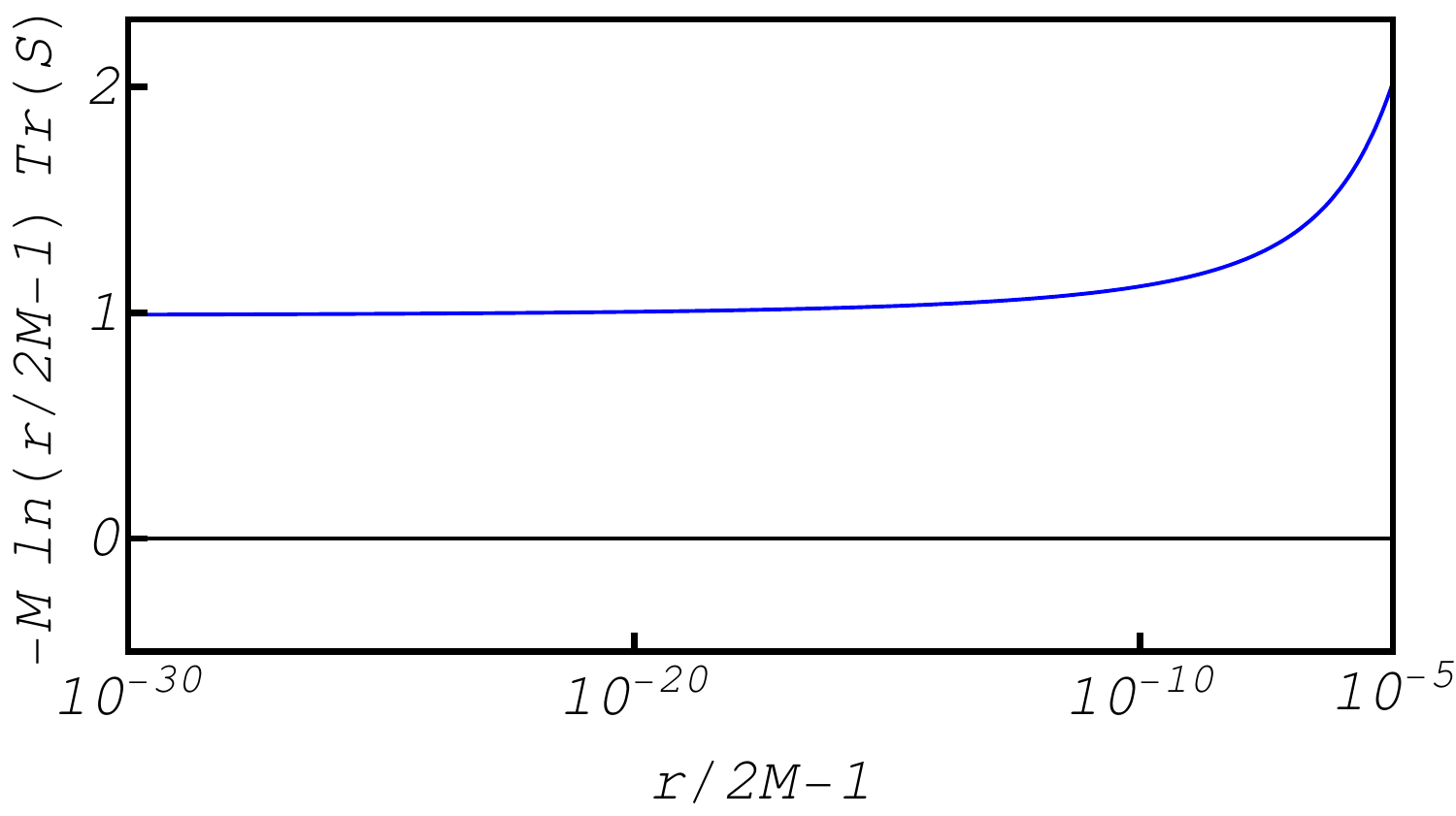}
~
\includegraphics[width=0.49\linewidth,clip]{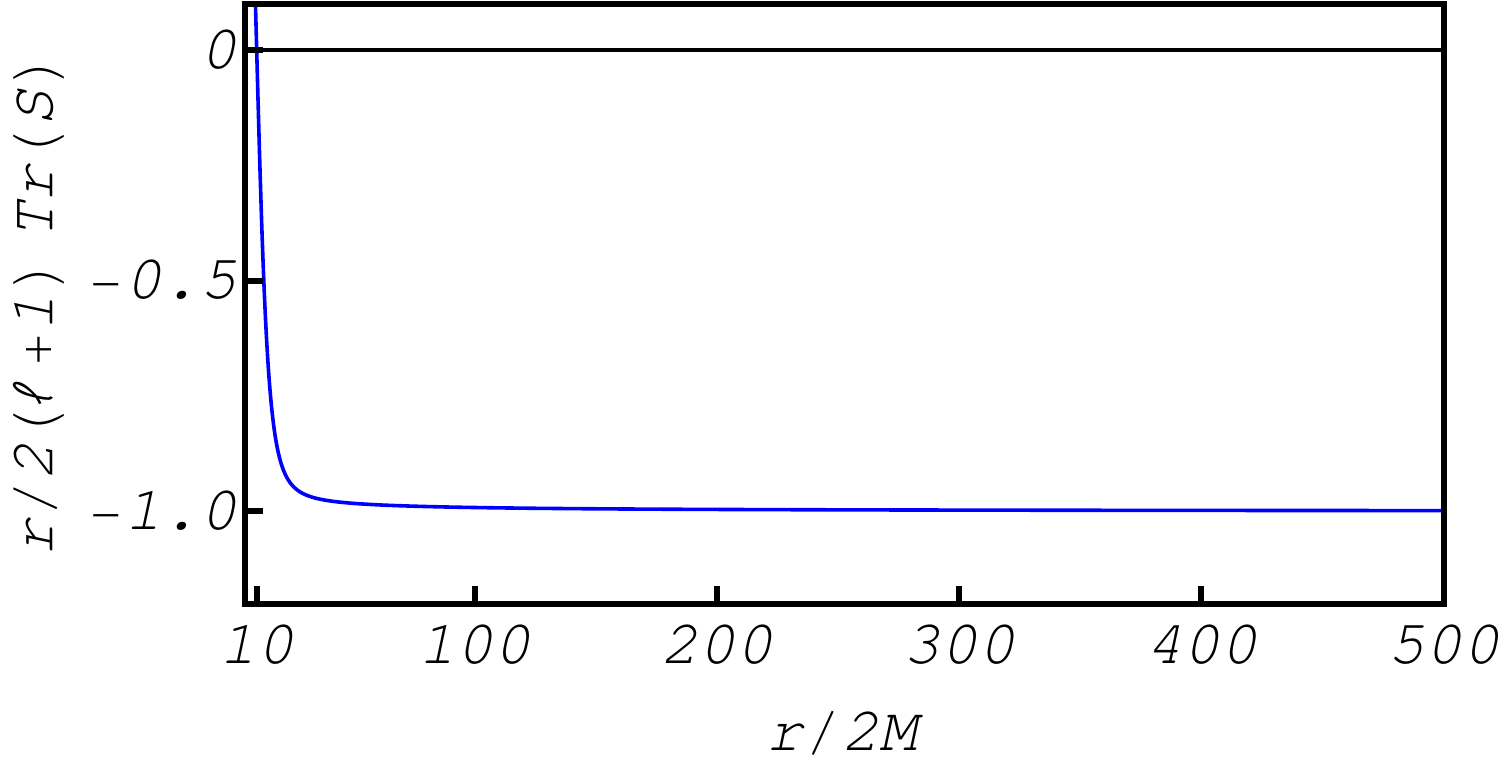}
 \caption{
Normalized ${\rm Tr}({\bm{S}})$ in near horizon region (left) 
and far region (right) by their asymptotic behaviors
${\rm Tr}({\bm{S}}) \simeq -2/\bar{x} = -1/(M \ln(r/(2M) - 1))$ and 
${\rm Tr}({\bm{S}}) \simeq -2(\ell+1)/r$, respectively.
The parameters and the boundary condition are the same as in Fig.~\ref{figSdefdcs}.
}
\label{figSdefdcsnormalized}
\end{figure}

\begin{figure}[thbp]
\includegraphics[width=0.5\linewidth,clip]{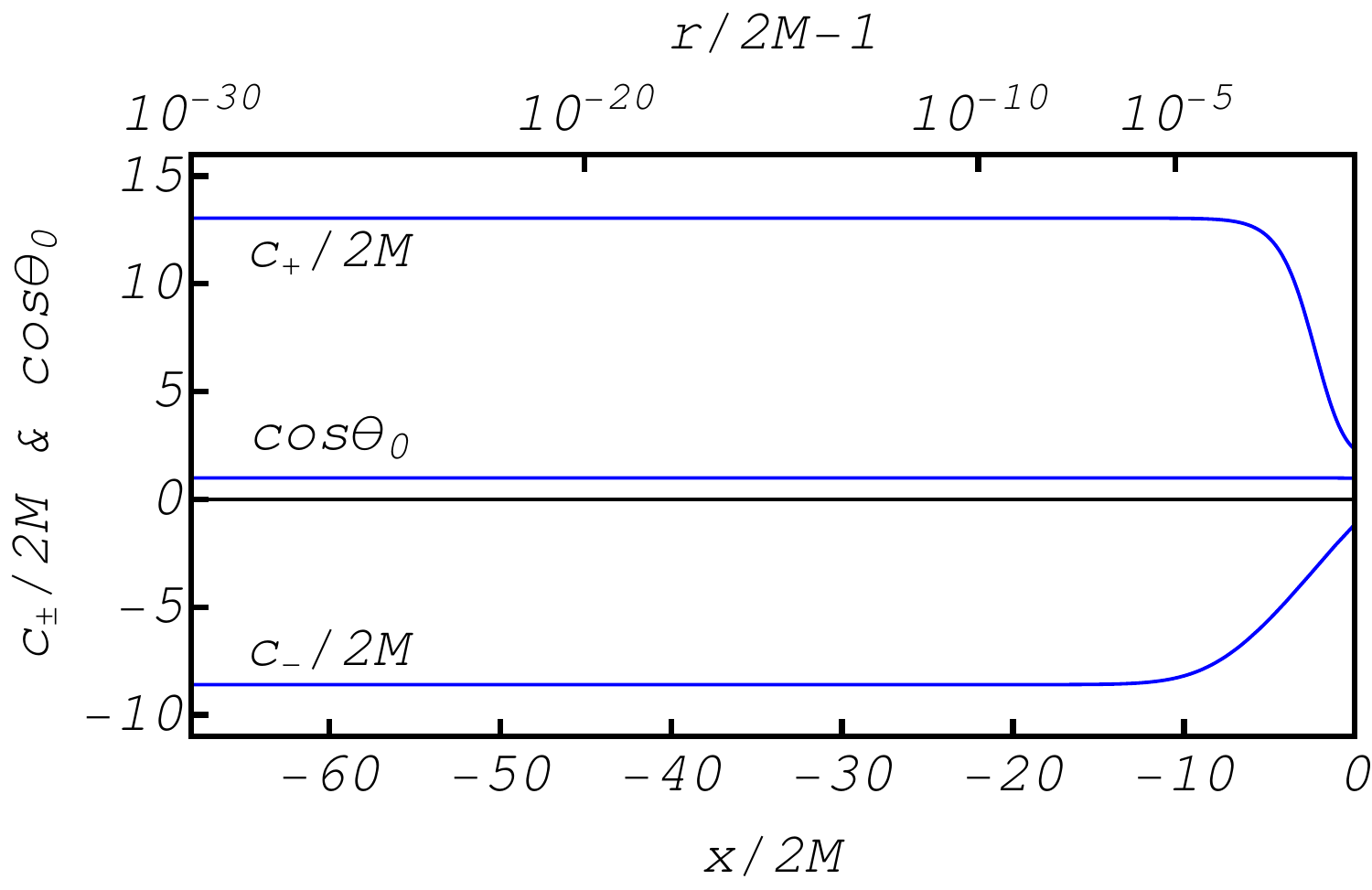}
 \caption{Quantities which asymptote to constants of the asymptotic solutions near the horizon.
The three curves $c_\pm$ and $\cos\theta_0$ correspond to the 
right hand side of Eqs.~\eqref{eq:cpm} and \eqref{eq:thetazero}.
The horizontal axis is the tortoise coordinate $x = r + 2M \ln(r/(2M)-1)$.
Note that $\cos\theta_0$ is almost unity, so $S_{12} \simeq 0$ near the horizon.}
\label{figdcsconst}
\end{figure}
The left figure in Fig.~\ref{figSdefdcsnormalized} shows that
the solution asymptotes to the approximate solution in the near horizon region, 
{\it i.e.,} the solution of Eq.~\eqref{sdefeqmulti} with vanishing potential
(see Appendix.~\ref{approximatesolutionsdef} where we show the explicit form of the general
$S$-deformation for $n = 2$ case when the potential vanishes).
In Fig.~\ref{figdcsconst}, we plot the right hand side of Eqs.~\eqref{eq:cpm} and \eqref{eq:thetazero}.
This also shows that the numerical solution can be matched with the approximate solution in $x \lesssim -10$,
and we can see that $x < c_\pm$ are satisfied there.

In far region, since the potential is not rapidly decaying as $r \to \infty$,
the solutions do not approximately satisfy Eq.~\eqref{sdefeqmulti} with vanishing potential.
However, since $|V_{12}| \ll |V_{11}|, |V_{22}|$ and
$|S_{12}| \ll |S_{11}|, |S_{22}|$ in far region,
we can expect that the system is approximately decoupled there.
The right figure in Fig.~\ref{figSdefdcsnormalized} show that the solution asymptotes to the 
decoupled solution.\footnote{
In far region, $S_{11}$ and $S_{22}$ approximately satisfies
the equations $dS_{ii}/dr = S_{ii}^2 - \ell(\ell+1)/r^2$ $(i = 1, 2)$, and the solutions are
$S_{ii} = (\ell - c_{ii} (\ell +1)r^{1+2\ell})/(r + c_{ii} r^{2(1+\ell)}) \to -(\ell+1)/r$.}

\subsection{charged squashed Kaluza-Klein black hole}
\label{subsec:sqkkbh}
The charged squashed Kaluza-Klein black hole~\cite{Ishihara:2005dp} 
is a solution of the five-dimensional Einstein-Maxwell system.
The metric and the gauge 1-form of this spacetime
are given by
\begin{align}
ds^2 &= - F dt^2 + \frac{K^2}{F}d\rho^2 + \rho^2 K^2 (d\theta^2 + \sin^2\theta d\phi^2)
+ \frac{(\rho_0 + \rho_+)(\rho_0 + \rho_-)}{K^2}(d\psi + \cos\theta d\phi)^2,
\notag
\\
A_{\mu}dx^\mu &= \frac{\sqrt{3}}{2}\frac{\rho_+\rho_-}{\rho}dt,
\notag
\end{align}
with the functions $F = (\rho - \rho_+)(\rho - \rho_-)/\rho^2, K^2 = (\rho + \rho_0)/\rho$.
The ranges of the angular coordinates are $0\le \theta \le \pi, 0\le \phi \le 2\pi, 0\le \psi \le 4\pi$.
The three parameters $\rho_\pm, \rho_0$ satisfy $\rho_+ \ge \rho_- \ge 0$ and $\rho_- + \rho_0 \ge 0$,
and the black hole horizon locates at $\rho = \rho_+$.
The relations among these parameters and the Komar mass $M$, the electric charge $Q$ and the size of the
extra dimension at infinity $r_\infty$ are
\begin{align}
M = \frac{2\pi r_\infty}{G}(\rho_+ + \rho_-), ~~ 
Q = \frac{\sqrt{3}\pi r_\infty}{G} \sqrt{\rho_+\rho_-},~~
r_\infty = 2\sqrt{(\rho_0 + \rho_+)(\rho_0 + \rho_-)},
\end{align}
where $G$ is the five-dimensional gravitational constant~\cite{Ishihara:2005dp, Kurita:2007hu}.
In the asymptotic region $\rho \to \infty$, the metric behaves
\begin{align}
ds^2 = -dt^2 + d\rho^2 + \rho^2 (d\theta^2 + \sin^2\theta d\phi^2) + (r_\infty/2)^2 (d\psi + \cos \theta d\phi)^2
+ {\cal O}(1/\rho).
\end{align}
Thus, the metric in the asymptotic region behaves the four-dimensional Minkowski spacetime with twisted $S^1$.

The gravitational and electro-magnetic perturbation around 
the charged squashed Kaluza-Klein black hole was discussed in~\cite{Nishikawa:2010zg},
and the master equations for $K=\pm1$ mode perturbations take the same form as Eqs.~\eqref{mastereqdcs1} 
and \eqref{mastereqdcs2}, 
with $d/dx = (F/K^2)d/d\rho$. 
The explicit form of the effective potential is given in~\cite{Nishikawa:2010zg}.\footnote{
Since the effective potential in~\cite{Nishikawa:2010zg} is not symmetric,
we need to consider the transformation of the master variable 
$\Phi_1 := r_{\infty} \phi_{1G}, \Phi_2 := \phi_{1E}$
so that the effective potential is real symmetric.
$V_{12}$ 
is given by multiplying $r_{\infty}$ to Eq.(52) in~\cite{Nishikawa:2010zg}.
Note that $V_{11}, V_{22}$ are same form as Eqs.(51) and (54) in~\cite{Nishikawa:2010zg}, respectively.
Since under the transformation, 
$(\phi_{1G}, \phi_{1E}) \to (\Phi_1,\Phi_2)$, 
the determinant of the effective potential does not change, 
the discussion based on using the determinant in~\cite{Nishikawa:2010zg} also does not change.
}

\begin{figure}[thbp]
\includegraphics[width=0.475\linewidth,clip]{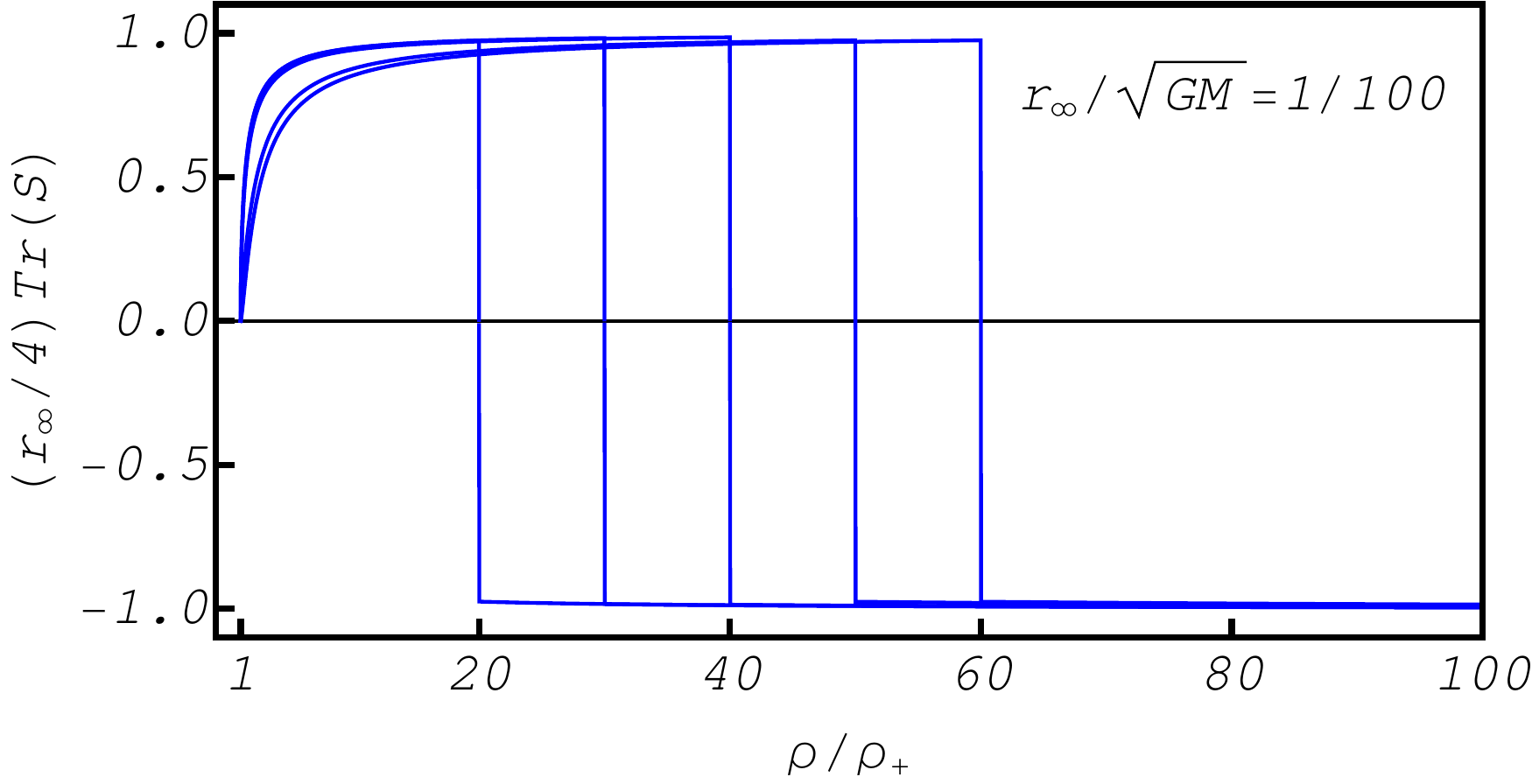}
~
\includegraphics[width=0.475\linewidth,clip]{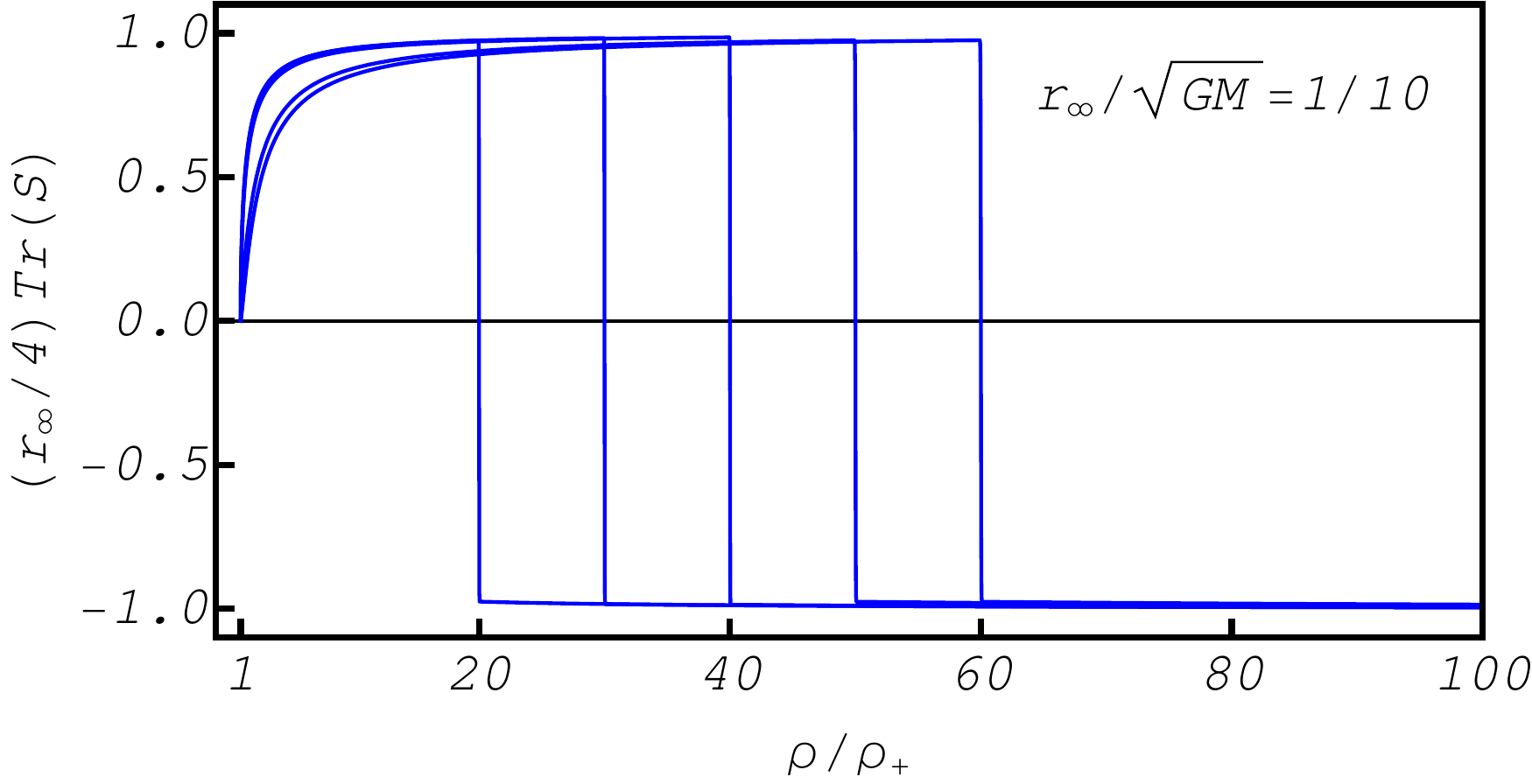}
\\
\vspace{5pt}
\includegraphics[width=0.475\linewidth,clip]{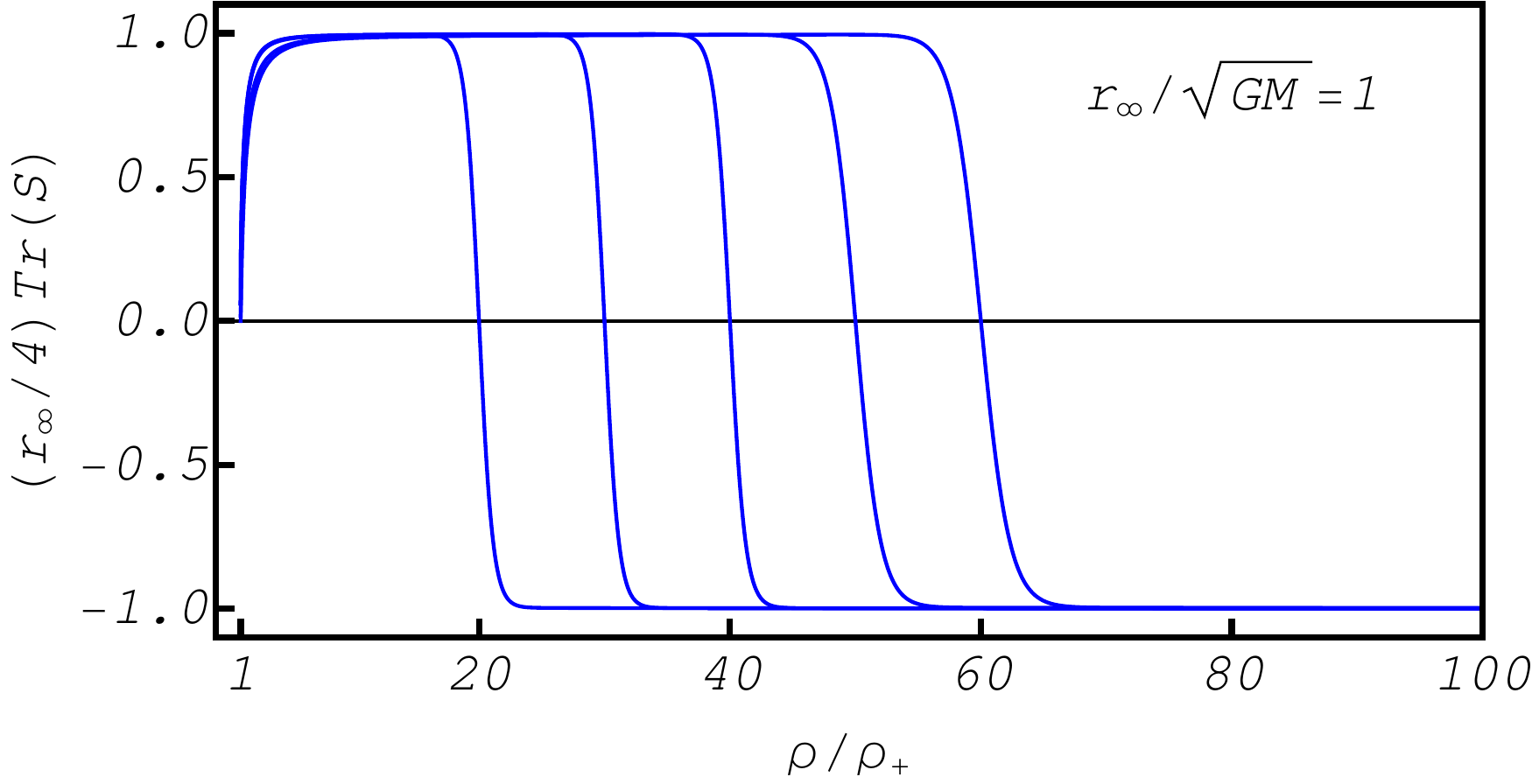}
~
\includegraphics[width=0.475\linewidth,clip]{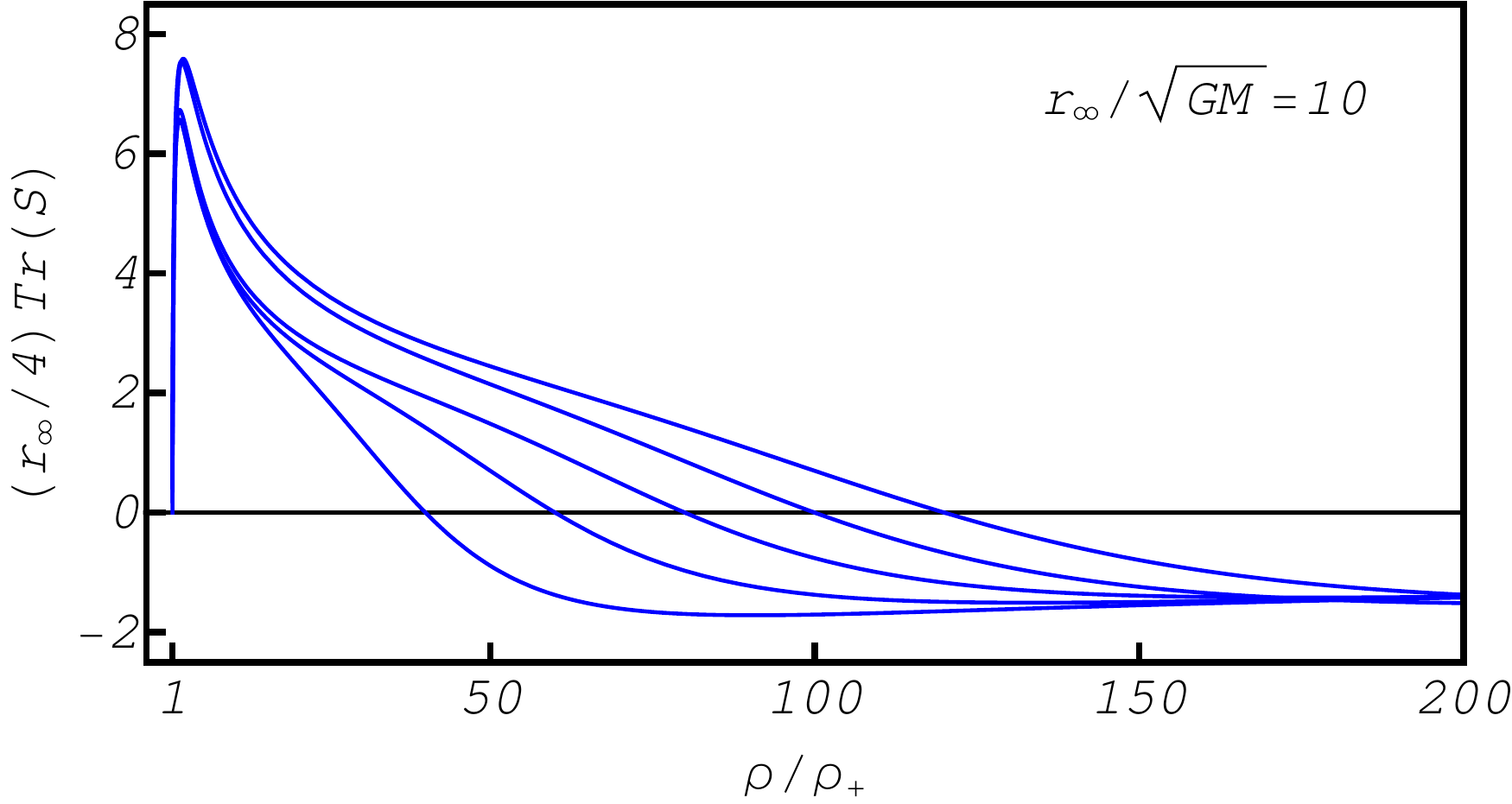}
\\
\vspace{5pt}
\includegraphics[width=0.475\linewidth,clip]{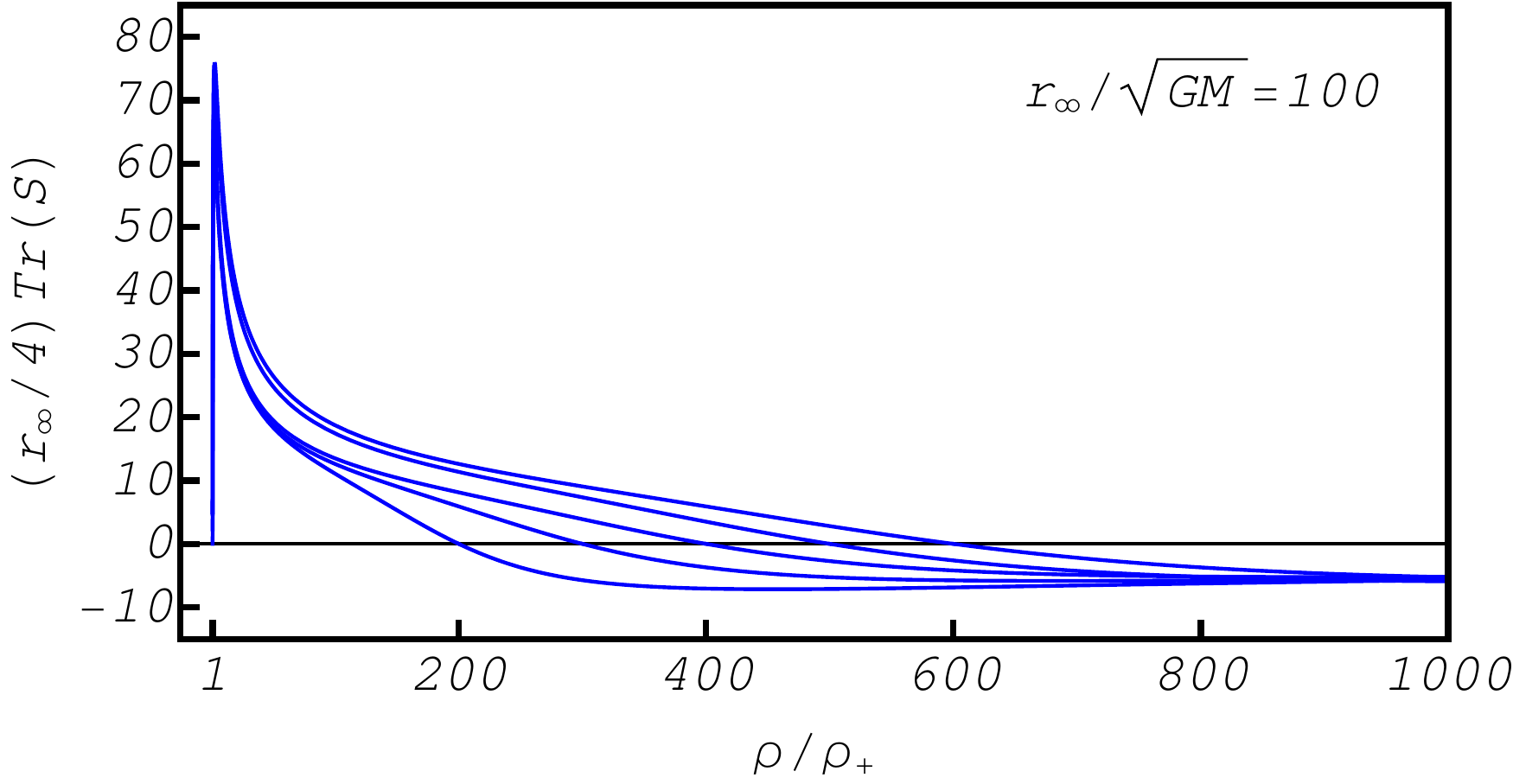}
 \caption{
${\rm Tr}({\bm{S}})$ for the numerical solution $\bm{S}$ for various parameters.
The boundary condition is $\bm{S} = 0$, and hence ${\rm Tr}(\bm{S}) = 0$, at $\rho = \rho_{\rm ini}$.
The values of $\rho_{\rm ini}$ can be read from each figure,
the corresponding charge is $Q/Q_{\rm extremal} = 0, 0.01, 0.5, 0.99, 1$ from left to right, respectively.
}
\label{figSdef}
\end{figure}
Since $\bm{V}$ is real symmetric, we consider a real symmetric $\bm{S}$. 
Then, Eq.~\eqref{sdefeqmulti} also takes the same form as Eqs.\eqref{s11eq}-\eqref{s22eq}.
At $\rho \to \infty$, 
the potential takes $V_{11} \to 4/r_{\infty}^2, V_{22} \to 4/r_{\infty}^2$ and $V_{12} \to 0$,
then the $S$-deformation takes 
$S_{11} \to -2/r_\infty, S_{22} \to -2/r_\infty, S_{12} \to 0$ and ${\rm tr}(\bm{S}) \to -4/r_\infty$.
We plot the numerical solutions ${\rm Tr}(\bm{S})$ in Fig.~\ref{figSdef} 
for the cases $r_{\infty}/\sqrt{GM} = 0.01, 0.1, 1, 10, 100$.
In each case, we plot $Q/Q_{\rm extremal} = 0, 0.01, 0.5, 0.99, 1$, respectively, 
where $Q_{\rm extremal} = \sqrt{3}M/2$ is the electric charge for the extremal case.
We adopt the boundary condition 
such that $\bm{S} = 0$ at $\rho = \rho_{\rm ini}$, 
and the values of $\rho_{\rm ini}$ can be read from Fig.~\ref{figSdef},
which correspond to $Q/Q_{\rm extremal} = 0, 0.01, 0.5, 0.99, 1,$ from left to right, respectively.
Fig.~\ref{figSdef} shows that ${\rm Tr}(\bm{S})$ is bounded,
and this implies that the spacetime is stable.\footnote{
In Fig.~1 in the erratum for \cite{Kimura:2007cr}, 
the curves for $\rho_0/\rho_+ = 30, 50, 100$ apparently looks regular, 
but in fact they are divergent near the horizon.
If we adopt the boundary condition $S = 0$ at $\rho/\rho_+ = 50$, then $S$ becomes regular everywhere.}
We should note that we can see
${\rm tr}(S) \to -4/r_\infty$ in the cases $r_{\infty}/\sqrt{M} = 10, 100,$
if we extend the plot range of the $\rho$ axis.
We just report that
we can find a regular $\bm{S}$ if we change $\rho_{\rm ini}$ to larger values.

Similarly to Sec.~\ref{sec:dcs}, we use a coordinate $\bar{x} = \rho_+ \ln(\rho/\rho_+ -1)$,
to solve Eqs.\eqref{s11eq}-\eqref{s22eq} in the near horizon region.
We checked that the right hand sides of Eqs.~\eqref{eq:cpm} and \eqref{eq:thetazero}
take constants near the horizon, and $x < c_\pm$ are satisfied there.
Note that for large $r_\infty$, $c_+$ becomes very large, {\it e.g.,} $c_+/\rho_+ \sim 10^8$, and 
then $(-x/2){\rm Tr}(\bm{S})$ becomes almost unity 
at a point extremely close to the horizon, {\it e.g.,} $\rho/\rho_+ -1 \sim 10^{-10^8}$.
In the extremal case, we use a coordinate $\tilde{x} = -\rho_+^2/(\rho - \rho_+)$
to solve Eqs.\eqref{s11eq}-\eqref{s22eq} in the near horizon region.
We note that the 
stability was already proven analytically in the extremal case~\cite{Nishikawa:2010zg}.

\section{Summary and Discussion}
\label{sec:summary}
In this paper, we extended the formalism in~\cite{Kimura:2017uor, Kimura:2018eiv} to 
coupled systems where the perturbative field equations take
the form of systems of coupled Schr\"odinger equations.
Similar to~\cite{Kimura:2017uor, Kimura:2018eiv}, the existence of $S$-deformation implies the
linear mode stability of the system.
Also, we applied our formalism to the case of 
the Schwarzschild black hole in dynamical Chern-Simons gravity
and a charged Kaluza-Klein black hole~\cite{Ishihara:2005dp, Nishikawa:2010zg},
and we showed the stability of the spacetime by finding a regular $S$-deformation function numerically.
While the dynamical Chern-Simons gravity case is a test problem because the stability 
was already shown in~\cite{Kimura:2018nxk}, 
our study of charged Kaluza-Klein black hole case gives a first numerical proof of 
the stability of the coupled mode for non-extremal cases.

For the coupled system, 
there is already a criteria, called the nodal theorem~\cite{Amann:1995}, 
for the existence and non-existence of the
negative energy bound state as an extension of the Sturm-Liouville theorem.
From the nodal theorem, we obtained a suggestion such that
there also exists a regular $S$-deformation for the coupled system if the spacetime is stable.
We showed that this is correct when the potential is of a compact support
(or rapidly decaying at infinity)
in Appendix~\ref{appendix:compactpotential}.
We mention some merit of our $S$-deformation method compared to the nodal theorem~\cite{Amann:1995}.
(i)~If we discuss the stability based on the nodal theorem,
we need to solve 
the Schr\"odinger equation from $x = L$ (or $x = -L$) with sufficiently large $L$, 
but if the spacetime is almost marginally stable, it is not trivial
how large $L$ we should consider.
If we discuss the stability based on the $S$-deformation method,
we can solve the equation from a finite point, and we do not need to care about the boundary condition
at infinity very much.
(ii)~To understand the proof of the nodal theorem~\cite{Amann:1995} is not easy,
but it is obvious that 
the existence of the regular $S$-deformation is a sufficient condition for the stability.
(iii)~We can easily show the non-existence of the zero mode just by showing two different regular 
$S$-deformation functions (see Appendix.~\ref{appendix:zeromode}).
In an usual way, we need to solve the 
Schr\"odinger equation for $E = 0$ with the decaying boundary condition at $x \to \infty$
or $x \to -\infty$, which is not always easy.

Finally, we briefly discuss an extension of the $S$-deformation method.
Let us consider a system 
\begin{align}
- \frac{d^2}{dx^2}\bm{\Phi} + \bm{V} \bm{\Phi} = \omega^2 \bm{{\cal C}} \bm{\Phi},
\label{extendedeq}
\end{align}
where $\bm{V}$ and $\bm{{\cal C}}$ are Hermitian matrices and 
$\bm{{\cal C}}$ is positive definite.
Since $\bm{\Phi}^\dag \bm{{\cal C}} \bm{\Phi} \ge 0$, from the similar discussion in Sec.~\ref{sec:2},
the existence of regular solution of $d\bm{S}/dx = \bm{S}^2 - \bm{V}$ implies the mode stability of
this system.

\section*{Acknowledgments}
We would like to thank B. Way for useful comments.
M.K. acknowledges financial support provided under
the European Union's H2020 ERC Consolidator Grant
``Matter and strong-field gravity: New frontiers in Einstein's theory''
grant agreement no. MaGRaTh-646597, 
and under the H2020-MSCA-RISE-2015 Grant No. StronGrHEP-690904.
T.T. acknowledges support in part by 
MEXT Grant-in-Aid for Scientific Research on Innovative Areas, Nos. 17H06357 and 17H06358, and by 
Grant-in-Aid for Scientific Research Nos. 26287044 and 15H02087.
M.K. also thanks Yukawa Institute for Theoretical Physics (YITP) at Kyoto University 
for their hospitality and the workshop ``Dynamics in Strong Gravity Universe'' YITP-T-18-05 held at YITP.

\appendix

\section{Hermitian $\bm{V}$ case as real symmetric problem}
\label{appendix:vermitianvasrealsymm}

We decompose $n\times n$ Hermitian potential $\bm{V} $ and $\bm{\Phi}$ into the real part and imaginary part 
as $\bm{V} = \bm{V}_R + i \bm{V}_I$ and $\bm{\Phi} = \bm{\Phi}_R + i \bm{\Phi}_I$.
From $\bm{V}^\dag = \bm{V}$, we obtain $\bm{V}_R^T = \bm{V}_R, \bm{V}_I^T = -\bm{V}_I$, where $T$
denotes a transposition.
The Schr\"odinger equation~\eqref{multischrodingereq} can be written in the form
\begin{align}
-\frac{d^2\bm{\Phi}_R}{dx^2} + \bm{V}_R \bm{\Phi}_R  -  \bm{V}_I \bm{\Phi}_I &= E \bm{\Phi}_R,
\\
- \frac{d^2\bm{\Phi}_I}{dx^2} +  \bm{V}_I \bm{\Phi}_R +  \bm{V}_R \bm{\Phi}_I &= E \bm{\Phi}_I.
\end{align}
So, we can consider this problem as $2n\times 2n$ real potential problem with
\begin{align}
\bm{U} = 
\begin{pmatrix}
\bm{V}_R  & -\bm{V}_I
\\ 
  \bm{V}_I & \bm{V}_R 
\end{pmatrix}
\end{align}
We should note that $\bm{U}$ is real symmetric since 
\begin{align}
\bm{U}^T = 
\begin{pmatrix}
\bm{V}_R^T  &  \bm{V}_I^T
\\ 
-\bm{V}_I^T  & \bm{V}_R^T
\end{pmatrix}
=
\begin{pmatrix}
\bm{V}_R  &- \bm{V}_I
\\ 
\bm{V}_I  & \bm{V}_R
\end{pmatrix} = \bm{U}.
\end{align}
When $\{\bm{\Phi}_\mu\} = \{\bm{\Phi}_{\mu R} + i \bm{\Phi}_{\mu I}\}, (\mu = 1,2,\cdots,m$ and $m\le 2n)$,
are linearly independent with complex coefficients,
\begin{align}
\begin{pmatrix}
\bm{\Phi}_{1R}
\\ 
\bm{\Phi}_{1I}  
\end{pmatrix},
\begin{pmatrix}
-\bm{\Phi}_{1I}
\\ 
\bm{\Phi}_{1R}
\end{pmatrix},
\begin{pmatrix}
 \bm{\Phi}_{2R}
\\ 
\bm{\Phi}_{2I}
\end{pmatrix},
\begin{pmatrix}
-\bm{\Phi}_{2I}
\\ 
 \bm{\Phi}_{2R} 
\end{pmatrix},
\cdots,
\begin{pmatrix}
 \bm{\Phi}_{mR}
\\ 
\bm{\Phi}_{mI}
\end{pmatrix},
\begin{pmatrix}
-\bm{\Phi}_{mI}
\\ 
 \bm{\Phi}_{mR} 
\end{pmatrix},
\end{align}
are linearly independent with real coefficients.
For $n$ linearly independent solutions $\{\bm{\Phi_i}\}$ ($i = 1,2\cdots, n$), 
defining $2n\times 2n$ real matrix
\begin{align}
\bm{Z} =
\begin{pmatrix}
\bm{\Phi}_{1R} &  \bm{\Phi}_{2R} & \cdots &\bm{\Phi}_{nR}
&-\bm{\Phi}_{1I} &  -\bm{\Phi}_{2I} & \cdots & -\bm{\Phi}_{nI}
\\ 
\bm{\Phi}_{1I}  &\bm{\Phi}_{2I}  &\cdots &\bm{\Phi}_{nI} 
& \bm{\Phi}_{1R} & \bm{\Phi}_{2R}  &\cdots & \bm{\Phi}_{nR}
\end{pmatrix},
\end{align}
this corresponds to Eq.~\eqref{ybardef} to the potential problem for $\bm{U}$.
We can show the relation
\begin{align}
|\det(\bm{Y})|^2 = \det(\bm{Z}), 
\end{align}
where $n\times n$ matrix $\bm{Y}$ is
\begin{align}
\bm{Y} =
(\bm{\Phi}_1,\bm{\Phi}_2,\cdots, \bm{\Phi}_n) &= 
(\bm{\Phi}_{1R} + i \bm{\Phi}_{1I},\bm{\Phi}_{2R} + i \bm{\Phi}_{2I},\cdots, \bm{\Phi}_{nR} + i \bm{\Phi}_{nI}).
\end{align}
Thus, $\det(\bm{Y}) = 0$ if and only if $\det(\bm{Z}) = 0$.

\section{Riccati transformation for a system of coupled Schr\"odinger equations}
\label{appendix:riccatitr}
We introduce the 
Riccati transformation for a system of coupled Schr\"odinger equations~\eqref{multischrodingereq}.
If there exists an $n\times n$ matrix $\bm{S}$ which satisfies 
$d\bm{\Phi}/dx = -\bm{S}\bm{\Phi}$, 
we can show
\begin{align}
\frac{d^2\bm{\Phi}}{dx^2}
&= 
-\frac{d\bm{S}}{dx} \bm{\Phi} + \bm{S}^2\bm{\Phi}.
\end{align}
We should note that $\bm{S}$ is not necessarily a Hermitian matrix in this section.
From the Schr\"odinger equation \eqref{multischrodingereq}, which $\bm{\Phi}$ is supposed to solve, we obtain
\begin{align}
\left(\bm{V} - E  + \frac{d\bm{S}}{dx} - \bm{S}^2\right)\bm{\Phi} = 0,
\end{align}
where we omitted to write an unit matrix in front of $E$.
If $\bm{\Phi}$ is a zero vector, the equation
\begin{align}
\det \left(\bm{V} - E  + \frac{d\bm{S}}{dx} - \bm{S}^2\right) = 0
\end{align}
should hold.
In fact, 
instead of solving the Schr\"odinger equation \eqref{multischrodingereq},
we can solve the equation
\begin{align}
\bm{V} - E  + \frac{d\bm{S}}{dx} - \bm{S}^2 = 0.
\label{riccatimulti}
\end{align}
Once a solution $\bm{S}$ is given, we can solve the equation $d\bm{\Phi}/dx = -\bm{S}\bm{\Phi}$
with respect to $\bm{\Phi}$, then $\bm{\Phi}$ satisfies the Schr\"odinger equation \eqref{multischrodingereq}.

On the other hand, for given $n$-linearly independent 
solutions $\{\bm{\Phi}_i \}$, $(i =1,2,\ldots,n)$ 
of Eq.\eqref{multischrodingereq},
we can construct $\bm{S}$ which satisfies Eq.~\eqref{riccatimulti} in the following way.
For an $n\times n$ matrix 
\begin{align}
\bm{Y} := (\bm{\Phi}_1, \bm{\Phi}_2,\ldots,\bm{\Phi}_n),
\label{ybardef}
\end{align}
which satisfies
\begin{align}
-\frac{d^2}{dx^2}\bm{Y} + \bm{V}\bm{Y}= E \bm{Y},
\label{schrodingereqY}
\end{align}
we define $\bm{S}$ as 
\begin{align}
\bm{S} := - \frac{d\bm{Y}}{dx} \bm{Y}^{-1},
\label{sfrombary}
\end{align}
where we consider a domain such that $\det(\bm{Y}) \neq 0$.
Since $\bm{S}$ satisfies $d\bm{Y}/dx = - \bm{S} \bm{Y}$, the relation
\begin{align}
\left(\bm{V} - E  + \frac{d\bm{S}}{dx} - \bm{S}^2\right)\bm{Y} = 0
\label{matrixschrodingereq}
\end{align}
holds. 
Since we focus on a domain with $\det(\bm{Y}) \neq 0$, this equation implies that 
$\bm{S}$ satisfies Eq.~\eqref{riccatimulti}.
For a solution of Eq.~\eqref{riccatimulti}, we can solve $d\bm{Y}/dx = - \bm{S} \bm{Y}$ with respect to $\bm{Y}$,
then it satisfies Eq.~\eqref{schrodingereqY}.
This solution $\bm{Y}$ contains $2 n^2$ integral constants 
which coincides with the number of integral constants of the general solution of Eq.~\eqref{schrodingereqY}.

We also discuss the condition for $\bm{S}$ defined by Eq.~\eqref{sfrombary}
to be a Hermitian matrix.
We can easily show that $\bm{S}$ becomes Hermitian
if it is chosen to be so at a point in solving Eq.~\eqref{sfrombary}.\footnote{From Eq.~\eqref{riccatimulti} and its Hermitian conjugate, 
an equation $d(\bm{S}^\dag - \bm{S})/dx = 
(\bm{S}^\dag - \bm{S})\bm{S}^\dag + \bm{S}(\bm{S}^\dag - \bm{S})$ holds.
If $\bm{S} = \bm{S}^\dag$ is satisfied at some point $x = x_0$, 
$\bm{S}$ becomes a Hermitian matrix. \label{foot8}}
This criterion is useful, but ``a point'' should be a finite point, cannot be infinity.
To discuss the relation between the decaying boundary condition at infinity and the Hermiticity of $\bm{S}$, 
here we introduce another criterion.
Defining $\bm{\rho}$ as
\begin{align}
\bm{\rho} := \bm{Y}^\dag \frac{d\bm{Y}}{dx}
-
\frac{d\bm{Y}^\dag}{dx}
\bm{Y},
\end{align}
one can show $d\bm{\rho}/dx = 0$ and hence
\begin{align}
\bm{\rho} = {\rm const}.
\end{align}
Thus, if $\bm{\rho} = 0$ at some point, $\bm{S}$ becomes a Hermitian matrix from the relation
\begin{align}
\bm{S}^\dag - \bm{S} = (\bm{Y}^\dag)^{-1} \bm{\rho} \bm{Y}^{-1}.
\end{align}
When all $\{\bm{\Phi}_i\}$ are decaying modes (or constant modes), 
$\bm{\rho}$ becomes zero since both $\bm{Y}$ and $d\bm{Y}/dx$ 
(only $d\bm{Y}/dx$ for constant modes) vanish at infinity,
and $\bm{S}$ becomes a Hermitian matrix.

\section{Existence of $S$-deformation of a compact support potential}
\label{appendix:compactpotential}

For simplicity, we consider the case with a compact support potential, {\it i.e.,}
$\bm{V} = 0$ for $|x| \ge L$ with a positive constant $L$.
Note that the following discussion can be extended to the case with a non-compact support potential 
if it rapidly decays at $x \to \pm \infty$.
Also, if $\bm{V} \to {\rm diag}[v_1,\ldots,v_n]$ with $v_i \ge 0, (i=1, \cdots, n)$ 
at $x \to \infty$ in some basis and vanishing components are rapidly decaying, 
the following discussion remains to be valid by a minor modification.\footnote{
Note that $\bm{V}$ at $x \to -\infty$  is zero because it corresponds to the horizon.
If $\bm{V}$ at $x \to \infty$ is not diagonalized, we can diagonalize it by a linear transformation of 
the wave functions.}

We denote $\bm{Y}_L^{(E)}$ and $\bm{Y}_R^{(E)}$ are solutions of Eq.~\eqref{schrodingereqY} for 
an energy $E (\le 0)$ which behave as
\begin{align}
\bm{Y}_L^{(E)}|_{x \le -L} &= {\rm diag}[e^{\sqrt{-E}x}, e^{\sqrt{-E}x}, \ldots, e^{\sqrt{-E}x}],
\\
\bm{Y}_R^{(E)}|_{x \ge L} &= {\rm diag}[e^{-\sqrt{-E}x}, e^{-\sqrt{-E}x}, \ldots, e^{-\sqrt{-E}x}].
\end{align}
For $E < 0$,
$\bm{Y}_L^{(E)}$ and $\bm{Y}_R^{(E)}$ are decaying mode at $x \to -\infty$ and $x \to \infty$, respectively.
It is clear that both $\bm{Y}_L^{(E)}$ and $\bm{Y}_R^{(E)}$ are continuous functions
of $E (\le 0)$ at a finite point of $x$.
We also define the corresponding $\bm{S}$ as 
\begin{align}
\bm{S}_L^{(E)} := -\frac{d\bm{Y}_L^{(E)}}{dx}\left(\bm{Y}_L^{(E)}\right)^{-1},
\quad
\bm{S}_R^{(E)} := -\frac{d\bm{Y}_R^{(E)}}{dx}\left(\bm{Y}_R^{(E)}\right)^{-1}.
\end{align}
In this section, we show that $\bm{S}_L^{(0)}$ and $\bm{S}_R^{(0)}$ are continuous for $-\infty < x <\infty$
when there exists no bound state with $E \le 0$.
Hereafter, we mainly focus on $\bm{S}_L^{(E)}$, but the same kind of discussion also holds for $\bm{S}_R^{(E)}$.

\begin{lemma}\label{lem1}
We denote Hermitian solutions of Eq.~\eqref{riccatimulti} for $E_1$ and $E_2$ with $E_2 > E_1$
by $\bm{S}^{(E_1)}$ and $\bm{S}^{(E_2)}$, respectively.
We assume that $\bm{S}^{(E_1)}$ and $\bm{S}^{(E_2)}$ are continuous for $x_- \le x \le x_+$.
If all eigenvalues of $\bm{S}^{(E_2)} - \bm{S}^{(E_1)}$ are positive at $x = x_-$,
those for $x_- \le x \le x_+$ are also positive.
\end{lemma}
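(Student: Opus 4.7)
The plan is to argue by contradiction, reducing the claim to a first-touch argument on the smallest eigenvalue of $\bm{D}(x) := \bm{S}^{(E_2)}(x) - \bm{S}^{(E_1)}(x)$. Since both $\bm{S}^{(E_i)}$ are Hermitian, $\bm{D}$ is Hermitian, so its eigenvalues are real and continuous in $x$. If the conclusion fails, then by continuity there is a first point $x_\ast \in (x_-, x_+]$ where the smallest eigenvalue of $\bm{D}$ reaches zero; on $[x_-, x_\ast)$ the matrix $\bm{D}$ is still positive definite, and at $x_\ast$ we may pick a unit eigenvector $\bm{e}$ with $\bm{D}(x_\ast)\bm{e} = 0$.

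First I would derive the ODE governing $\bm{D}$. Subtracting the Riccati equations $d\bm{S}^{(E_i)}/dx = (\bm{S}^{(E_i)})^2 - \bm{V} + E_i$ and using the identity $(\bm{S}^{(E_2)})^2 - (\bm{S}^{(E_1)})^2 = \bm{S}^{(E_2)}\bm{D} + \bm{D}\bm{S}^{(E_1)}$ gives
\begin{equation*}
\frac{d\bm{D}}{dx} = \bm{S}^{(E_2)}\bm{D} + \bm{D}\bm{S}^{(E_1)} + (E_2 - E_1)\,\bm{I}.
\end{equation*}
The crucial algebraic observation is that the two matrix-product terms are annihilated when sandwiched between $\bm{e}^\dag$ and $\bm{e}$: since $\bm{D}(x_\ast)$ is Hermitian, $\bm{D}(x_\ast)\bm{e} = 0$ implies $\bm{e}^\dag \bm{D}(x_\ast) = 0$ as well, so
\begin{equation*}
\bm{e}^\dag\,\frac{d\bm{D}}{dx}\bigg|_{x_\ast}\!\bm{e} \;=\; E_2 - E_1 \;>\; 0.
\end{equation*}

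Finally I would compare this with the elementary geometric fact at a first-touch point. The scalar function $g(x) := \bm{e}^\dag \bm{D}(x)\bm{e}$ (with $\bm{e}$ the fixed eigenvector chosen at $x_\ast$) satisfies $g(x) \geq \lambda_{\min}(\bm{D}(x)) > 0$ for $x_- \leq x < x_\ast$ and $g(x_\ast) = 0$, so its left derivative at $x_\ast$ is non-positive, hence $g'(x_\ast) \leq 0$. This directly contradicts $g'(x_\ast) = E_2 - E_1 > 0$, ruling out the existence of $x_\ast$ and establishing the lemma.

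I do not anticipate a serious obstacle: the whole argument rests on the clean cancellation $\bm{e}^\dag(\bm{S}^{(E_2)}\bm{D} + \bm{D}\bm{S}^{(E_1)})\bm{e} = 0$ at a null eigenvector of $\bm{D}$, which is precisely where the Hermiticity of the $\bm{S}^{(E_i)}$ is used. The only minor subtlety is justifying that the first-touch point $x_\ast$ exists and that $\bm{D}$ really stays positive definite on $[x_-, x_\ast)$; both follow from continuity of eigenvalues of a continuous Hermitian matrix-valued function, which is standard.
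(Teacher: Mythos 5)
Your proof is correct and follows essentially the same route as the paper's: a first-touch argument at the point where an eigenvalue of $\bm{S}^{(E_2)}-\bm{S}^{(E_1)}$ first vanishes, combined with the Riccati equations to show that the quadratic form of the derivative at the null eigenvector equals $E_2-E_1>0$, contradicting the non-positive slope forced by the geometry. Your only (minor, and slightly cleaner) variation is to work with the fixed-vector quadratic form $g(x)=\bm{e}^\dag\bm{D}(x)\bm{e}$ rather than differentiating the eigenvalue itself, which sidesteps the differentiability of the eigenvector that the paper handles in a footnote.
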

\begin{proof}
Suppose that $\bm{S}^{(E_2)} - \bm{S}^{(E_1)}$ has a zero eigenvalue at 
$x = a$ with $x_- < a < x_+$, and it does not have a zero eigenvalue for $x_- \le x < a$.
At $x = a$, the equations
\begin{align}
\left(\bm{S}^{(E_2)} - \bm{S}^{(E_1)}\right) \hat{\bm{e}}_a &= 0,
\label{s2minuss1ata}
\\
\hat{\bm{e}}_a^\dag \left( \frac{d\bm{S}^{(E_2)}}{dx} - \frac{d\bm{S}^{(E_1)}}{dx}  \right)\hat{\bm{e}}_a &\le 0,
\label{se2se1inequality}
\end{align}
should hold,
where $\hat{\bm{e}}_a$ is an unit eigenvector
of $\bm{S}^{(E_1)} - \bm{S}^{(E_2)}$ 
for the zero eigenvalue at $x = a$.~\footnote{
Let us denote $\bm{e}$ be a unit eigenvector of 
$\bm{S}^{(E_2)} - \bm{S}^{(E_1)}$ with the eigenvalue $\Lambda$ which vanishes at $x = a$.
The derivative of $\Lambda$ becomes
$d\Lambda/dx =  \bm{e}^\dag (\bm{S}^{(E_2)} - \bm{S}^{(E_1)}) \bm{e}$,
where we used 
$(d\bm{e}^\dag/dx)(\bm{S}^{(E_2)} - \bm{S}^{(E_1)})\bm{e} +
\bm{e}^\dag(\bm{S}^{(E_2)} - \bm{S}^{(E_1)})(d\bm{e}/dx) = \Lambda ((d\bm{e}^\dag/dx)\bm{e} + \bm{e}^\dag (d\bm{e}/dx)) =0 $.
Since $\bm{e} = \hat{\bm{e}}_a$ at $x = a$, we obtain Eq.~\eqref{se2se1inequality} from $d\Lambda/dx|_{x = a} \le 0$.
}
We note that $\hat{\bm{e}}_a$ is a constant vector.\footnote{
In this paper, we use the hat symbol to denote a constant vector.
}
However, at $x = a$, we also have the relation
\begin{align}
\hat{\bm{e}}_a^\dag \left( \frac{d\bm{S}^{(E_2)}}{dx} - \frac{\bm{S}^{(E_1)}}{dx}  \right)\hat{\bm{e}}_a
&= 
E_2 - E_1 + \hat{\bm{e}}_a^\dag \left(\bm{S}^{(E_2)}\right)^2 \hat{\bm{e}}_a
- \hat{\bm{e}}_a \left(\bm{S}^{(E_1)}\right)^2 \hat{\bm{e}}_a
\notag\\ &= E_2 - E_1 > 0,
\end{align}
where we used Eq.~\eqref{s2minuss1ata} in the second equality.
This is a contradiction. 
\end{proof}

\begin{lemma}\label{lem2}
Let us assume that $\bm{S}$ is continuous for $x_- < x < x_+$ and $\bm{S}$ 
is divergent at $x = x_+$ $(x = x_-)$.
Then, the eigenvalue of $\bm{S}$ that diverges behaves as 
$-1/(x - x_+) \to +\infty$ at $x \to x_+ -0$ $(-1/(x - x_-) \to  -\infty$ at $x \to x_- + 0)$.
\end{lemma}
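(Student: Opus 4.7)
The plan is to reduce the matrix problem to a scalar ODE for a divergent eigenvalue of $\bm{S}$ and then analyse that ODE by inversion. Because $\bm{S}$ is Hermitian and continuous on $(x_-,x_+)$, its eigenvalues are real-valued continuous functions of $x$ there, so the divergence hypothesis at $x_+$ forces at least one eigenvalue $\lambda(x)$ with $|\lambda(x)|\to\infty$ as $x\to x_+-0$. Pick a unit eigenvector $\bm{e}(x)$ associated with such a $\lambda$.

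Differentiating $\lambda=\bm{e}^\dag\bm{S}\bm{e}$ and using $d|\bm{e}|^2/dx = 0$ together with $\bm{S}\bm{e}=\lambda\bm{e}$ kills the contribution of $d\bm{e}/dx$, leaving $d\lambda/dx=\bm{e}^\dag(d\bm{S}/dx)\bm{e}$. Substituting the Riccati equation \eqref{sdefeqmulti} recovers the formula already used in Sec.~\ref{sec:boundedtrs},
\begin{align}
\frac{d\lambda}{dx}=\lambda^2-\bm{e}^\dag\bm{V}\bm{e}.\notag
\end{align}
Because $\bm{V}$ is bounded on the domain and $\bm{e}$ is a unit vector, the quantity $\bm{e}^\dag\bm{V}\bm{e}$ is uniformly bounded by some constant $M$ near $x_+$.

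Next, I introduce $u:=1/\lambda$, which is well-defined and satisfies $u(x)\to 0$ as $x\to x_+-0$. A direct calculation gives
\begin{align}
\frac{du}{dx}=-1+u^2\,\bm{e}^\dag\bm{V}\bm{e},\notag
\end{align}
so as $|u|$ becomes small the right-hand side is arbitrarily close to $-1$. A Gr\"onwall-type estimate on this ODE then yields $u(x)=-(x-x_+)+o(x-x_+)$ as $x\to x_+-0$, i.e.\ $\lambda(x)=-1/(x-x_+)+o(1/(x-x_+))$. Since $x-x_+<0$ on approach from the left, $\lambda\to+\infty$, as claimed. The statement at the left endpoint follows by the mirror argument: writing the same expansion near $x_-$, the factor $x-x_-$ is now positive, giving $\lambda\sim-1/(x-x_-)\to-\infty$ as $x\to x_-+0$.

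The main obstacle is making the asymptotic analysis rigorous in the presence of a moving eigenvector $\bm{e}(x)$ and the possibility that several eigenvalues diverge simultaneously. The boundedness of $\bm{V}$ from the paper's standing assumption, combined with the fact that any two distinct divergent eigenvalues separate once $|\lambda|$ is large (so each admits a continuous unit eigenvector in a one-sided neighbourhood of $x_+$), lets us apply the scalar argument independently to every diverging eigenvalue. The other subtlety is ruling out the divergent branch $\lambda\sim-1/(x-c)$ with $c\neq x_+$: such a branch would produce a pole strictly inside $(x_-,x_+)$, contradicting continuity of $\bm{S}$ on the open interval, so only the $c=x_+$ branch survives.
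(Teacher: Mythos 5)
Your proof follows essentially the same route as the paper's: reduce to the scalar Riccati equation $d\lambda/dx=\lambda^2-\bm{e}^\dag\bm{V}\bm{e}$ for the divergent eigenvalue and read off the $-1/(x-x_+)$ behaviour; the paper stops at "$d\lambda/dx\simeq\lambda^2$" and defers rigor to a footnote, while your substitution $u=1/\lambda$ with $du/dx=-1+u^2\,\bm{e}^\dag\bm{V}\bm{e}$ supplies exactly the missing justification. The argument is correct.
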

\begin{proof}
From the assumption, 
one of the eigenvalue of $\bm{S}$ is divergent at $x = x_+$.
Let $\bm{e}$ be the unit eigenvector of $\bm{S}$ with the eigenvalue $\lambda$
\begin{align}
\bm{S} \bm{e} = \lambda \bm{e},
\label{selambdae}
\end{align}
and $|\lambda| \to \infty$ at $x = x_+$.
Note that $\bm{e}$ is not a constant vector.
Taking $x$ derivative of Eq.~\eqref{selambdae} and multiplying $\bm{e}^\dag$ from the left side,
we obtain
\begin{align}
\bm{e}^\dag \frac{d\bm{S}}{dx} \bm{e}  &= 
\frac{d\lambda}{d x}.
\end{align}
From Eq.~\eqref{riccatimulti}, this equation can be written in the form
\begin{align}
\frac{d\lambda}{d x} 
&= 
\lambda^2
-\bm{e}^\dag(\bm{V} - E)\bm{e}.
\end{align}
Thus, when $|\lambda| \to \infty$ at $x \to x_+$, $\lambda$ approximately satisfies
$d\lambda/dx \simeq \lambda^2$, and the approximate solution is 
$\lambda \simeq - 1/(x - x_+) \to + \infty$ at $x \to x_+ -0$.\footnote{
Precisely speaking, this is just a sketch of the proof, 
but we can also give a rigorous proof.}
\end{proof}

\begin{lemma}\label{lem3}
Suppose that $\det(\bm{Y})$  is not identically zero.
$\bm{S}$ is divergent at $x = x_+$ if and only if $\det(\bm{Y}) = 0$ at $x = x_+$.
\end{lemma}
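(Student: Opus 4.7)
The plan is to prove the two implications separately, combining Lemma~\ref{lem2} with Jacobi's formula for the first, and the uniqueness of solutions to linear initial value problems for the second.

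For the ``only if'' direction, I would assume $\bm{S}$ diverges at $x = x_+$. By Lemma~\ref{lem2}, every divergent eigenvalue of $\bm{S}$ behaves like $-1/(x - x_+)$ as $x \to x_+ - 0$, tending to $+\infty$. Letting $k \ge 1$ denote the number of such divergent eigenvalues counted with multiplicity, we obtain ${\rm Tr}(\bm{S}) = -k/(x - x_+) + O(1)$ in a one-sided neighborhood of $x_+$. Combining Eq.~\eqref{sfrombary} with Jacobi's formula for the determinant yields
\begin{align}
\frac{d\ln\det(\bm{Y})}{dx} = {\rm Tr}\!\left(\bm{Y}^{-1}\frac{d\bm{Y}}{dx}\right) = -{\rm Tr}(\bm{S}) = \frac{k}{x - x_+} + O(1),
\notag
\end{align}
and integrating gives $\det(\bm{Y}) \sim C\,(x - x_+)^k$ near $x_+$ for some nonzero constant $C$, so $\det(\bm{Y})(x_+) = 0$.

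For the ``if'' direction, I would argue by contraposition: assume $\bm{S}$ is bounded in a neighborhood of $x_+$, and suppose for contradiction that $\det(\bm{Y})(x_+) = 0$. Then the kernel of $\bm{Y}(x_+)$ is nontrivial, so I can pick a constant vector $\hat{\bm{v}} \neq 0$ with $\bm{Y}(x_+)\hat{\bm{v}} = 0$. Since $\bm{Y}$ solves Eq.~\eqref{schrodingereqY} and $d\bm{Y}/dx = -\bm{S}\bm{Y}$, the vector-valued function $\bm{Y}\hat{\bm{v}}$ also solves the Schr\"odinger system, and it has vanishing initial data at $x_+$, namely $\bm{Y}(x_+)\hat{\bm{v}} = 0$ and $(d\bm{Y}/dx)(x_+)\hat{\bm{v}} = -\bm{S}(x_+)\bm{Y}(x_+)\hat{\bm{v}} = 0$, where the boundedness of $\bm{S}$ at $x_+$ is essential for the second equality. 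By the uniqueness of solutions of the linear second-order initial value problem (which applies because $\bm{V}$ is piecewise continuous and bounded, as assumed in Sec.~\ref{sec:2}), $\bm{Y}(x)\hat{\bm{v}} \equiv 0$ throughout the domain, forcing $\det(\bm{Y}) \equiv 0$ and contradicting the hypothesis.

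The main obstacle is largely bookkeeping rather than conceptual. In the only-if direction one must appeal to Lemma~\ref{lem2} carefully, to see that every divergent eigenvalue of $\bm{S}$ has the same sign behavior $-1/(x - x_+)$, which guarantees that the divergent contributions to ${\rm Tr}(\bm{S})$ add rather than cancel and that $\det(\bm{Y})$ is driven to $0$ rather than to infinity at $x_+$. The if direction is otherwise standard once uniqueness of linear initial value problems is invoked under the boundedness assumption on $\bm{V}$.
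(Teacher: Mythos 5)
Your proof is correct, but both halves take a different route from the paper's. For the ``only if'' direction the paper simply notes the statement is immediate: $\bm{Y}$ and $d\bm{Y}/dx$ are continuous solutions of a linear system, so if $\det(\bm{Y})(x_+)\neq 0$ then $\bm{Y}^{-1}$ is continuous at $x_+$ and $\bm{S}=-(d\bm{Y}/dx)\bm{Y}^{-1}$ is bounded there; your detour through Lemma~\ref{lem2}, ${\rm Tr}(\bm{S})$ and Jacobi's formula reaches the same conclusion but leans on the asymptotic form $\lambda\simeq -1/(x-x_+)$, which the paper itself flags as only a sketch, and your claim $\det(\bm{Y})\sim C(x-x_+)^k$ with $C\neq 0$ is slightly stronger than Lemma~\ref{lem2} justifies (the $o(1/(x-x_+))$ corrections could spoil the clean power law), though all you actually need --- $\det(\bm{Y})\to 0$ --- survives. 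For the ``if'' direction the paper instead uses the scalar first-order equation $d\det(\bm{Y})/dx=-\det(\bm{Y})\,{\rm Tr}(\bm{S})$ together with uniqueness for that ODE to conclude $\det(\bm{Y})\equiv 0$; your argument via a kernel vector $\hat{\bm{v}}$ and uniqueness of the second-order linear initial value problem for $\bm{Y}\hat{\bm{v}}$ is equally valid and arguably cleaner, since the second-order uniqueness propagates globally without worrying about where ${\rm Tr}(\bm{S})$ stays bounded away from $x_+$. One small point of hygiene: when $\det(\bm{Y})(x_+)=0$ the matrix $\bm{S}(x_+)$ is not literally defined, so the vanishing of $(d\bm{Y}/dx)(x_+)\hat{\bm{v}}$ should be obtained as the one-sided limit of $-\bm{S}(x)\bm{Y}(x)\hat{\bm{v}}$, using boundedness of $\bm{S}$ near $x_+$ together with continuity of $\bm{Y}\hat{\bm{v}}$; with that rephrasing your argument is airtight.
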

\begin{proof}
It is trivial that $\det(\bm{Y}) = 0$ at $x = x_+$ if $\bm{S}$
is divergent at $x = x_+$ because 
$\bm{S}= - (d\bm{Y}/dx) \bm{Y}^{-1}$, and $\bm{Y}$ and $d\bm{Y}/dx$ are continuous in $-\infty < x < \infty$.
Thus, we need to show that 
$\bm{S}$ is divergent at $x = x_+$ if $\det(\bm{Y}) = 0$ at $x = x_+$.
Let us assume that $\bm{S}$ is not divergent at 
$x = x_+$ when $\det(\bm{Y}) = 0$ at $x = x_+$.
In that case, from the relation
\begin{align}
\frac{d\det(\bm{Y})}{dx}
= \det(\bm{Y}) {\rm Tr}(\bm{Y}_L^{-1} d\bm{Y}/dx) 
= -\det(\bm{Y}) {\rm Tr}(\bm{S}),
\end{align}
$\det(\bm{Y})$ vanishes everywhere. This contradicts 
with the assumption that  $\det(\bm{Y})$  is not identically zero.
\end{proof}

\begin{lemma}\label{lem4}
If $\det\left(\bm{Y}_L^{(E_2)}\right)$ has a zero at $x = x_{E_2}$, 
there exists $E_1 (<E_2)$ such that
$\det\left(\bm{Y}_L^{(E)}\right)$ has a zero at $x = x_{E}$ for $E_1 < E < E_2$.
\end{lemma}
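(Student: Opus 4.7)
The plan is to establish that the zero of $\det\bm{Y}_L^{(E_2)}$ at $x=x_{E_2}$ is simple (a sign change in $x$), and then exploit the continuous dependence of $\bm{Y}_L^{(E)}$ on the parameter $E$ to transport the sign change to slightly smaller energies. For each such $E$, the intermediate value theorem in $x$ will then yield a zero of $\det\bm{Y}_L^{(E)}$.

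First I would translate the hypothesis: by Lemma~\ref{lem3}, $\det\bm{Y}_L^{(E_2)}(x_{E_2})=0$ is equivalent to $\bm{S}_L^{(E_2)}$ diverging at $x_{E_2}$. By Lemma~\ref{lem2} the divergent eigenvalue behaves as $1/(x_{E_2}-x)\to+\infty$ as $x\to x_{E_2}^-$, so (in the generic case that a single eigenvalue diverges) one has ${\rm Tr}\,\bm{S}_L^{(E_2)}=1/(x_{E_2}-x)+O(1)$ near $x_{E_2}^-$. Integrating the identity $d\ln|\det\bm{Y}_L^{(E_2)}|/dx=-{\rm Tr}\,\bm{S}_L^{(E_2)}$, used already in the proof of Lemma~\ref{lem3}, on a left neighborhood of $x_{E_2}$ gives $\det\bm{Y}_L^{(E_2)}(x)\sim c\,(x-x_{E_2})$ with $c\neq 0$, so $\det\bm{Y}_L^{(E_2)}$ changes sign at $x_{E_2}$.

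Next I would invoke continuous dependence of the linear Schr\"odinger system on the parameter $E$. The initial data $\bm{Y}_L^{(E)}(-L)=e^{-\sqrt{-E}L}\bm{I}$ and $d\bm{Y}_L^{(E)}/dx|_{-L}=\sqrt{-E}\,e^{-\sqrt{-E}L}\bm{I}$ depend continuously on $E\le 0$, and the coefficient $\bm{V}-E$ is continuous in $E$, so $\bm{Y}_L^{(E)}(x)$, and hence $\det\bm{Y}_L^{(E)}(x)$, is jointly continuous in $(x,E)$ on any bounded set. Choose $\delta>0$ so small that $\det\bm{Y}_L^{(E_2)}(x_{E_2}\pm\delta)$ are nonzero with opposite signs; joint continuity then produces $E_1<E_2$ such that for every $E\in(E_1,E_2)$ the quantities $\det\bm{Y}_L^{(E)}(x_{E_2}\pm\delta)$ still have opposite signs. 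The intermediate value theorem applied to the continuous function $x\mapsto\det\bm{Y}_L^{(E)}(x)$ on $[x_{E_2}-\delta,x_{E_2}+\delta]$ then yields the required $x_E$ with $\det\bm{Y}_L^{(E)}(x_E)=0$.

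The main obstacle is the possibility that several eigenvalues of $\bm{S}_L^{(E_2)}$ diverge simultaneously at $x_{E_2}$, producing a zero of $\det\bm{Y}_L^{(E_2)}$ of order $m\ge 2$; for even $m$ the sign change vanishes and the above argument breaks down. To handle this degenerate case I would combine the fact, furnished by Lemma~\ref{lem1}, that $\bm{S}_L^{(E_2)}-\bm{S}_L^{(E)}$ has all positive eigenvalues for $E<E_2$ (using the boundary comparison $\sqrt{-E}-\sqrt{-E_2}>0$ at $x=-L$ to verify the hypothesis there) with the analyticity of $E\mapsto \bm{Y}_L^{(E)}(x)$, which together should force the higher-order zero to unfold into genuine zeros on the side $E<E_2$ rather than disappear.
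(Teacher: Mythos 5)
Your route differs from the paper's: you perturb $\det\bm{Y}_L^{(E)}$ itself, using a sign change plus the intermediate value theorem, whereas the paper perturbs the \emph{reciprocal of the divergent eigenvalue}. By Lemma~\ref{lem2} the divergent eigenvalue satisfies $(\lambda^{(E_2)})^{-1}\simeq -(x-x_{E_2})$, so $(\lambda^{(E_2)})^{-1}$ has a zero at $x_{E_2}$ with derivative exactly $-1$; this zero is automatically transversal no matter how many eigenvalues diverge simultaneously, and continuity of $(\lambda^{(E)})^{-1}$ in $E$ immediately transports it to nearby energies. Your determinant version is equivalent only in the generic case, and the obstacle you yourself flag is a genuine gap: if $m\ge 2$ eigenvalues diverge at $x_{E_2}$ then $\det\bm{Y}_L^{(E_2)}\sim c\,(x-x_{E_2})^m$, and for even $m$ there is no sign change, so the IVT step collapses. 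Your proposed repair does not close this. Lemma~\ref{lem1} applied with the boundary data at $x=-L$ gives that $\bm{S}_L^{(E_2)}-\bm{S}_L^{(E)}$ is positive definite for $E<E_1$, i.e.\ it bounds the eigenvalues of $\bm{S}_L^{(E)}$ from \emph{above} by those of $\bm{S}_L^{(E_2)}$; an upper bound cannot force an eigenvalue of $\bm{S}_L^{(E)}$ to diverge to $+\infty$, so it cannot make the degenerate zero ``unfold'' on the side $E<E_2$. (That monotonicity is used in the paper for Lemma~\ref{lem5}, in the opposite logical direction.) The appeal to analyticity in $E$ is likewise only an expectation, not an argument.

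A second, smaller issue: a sign-change argument presupposes that $\det\bm{Y}_L^{(E)}$ is real. For a genuinely complex Hermitian $\bm{V}$ the determinant is complex, and the reduction of Appendix~\ref{appendix:vermitianvasrealsymm} replaces it by $\det(\bm{Z})=|\det(\bm{Y})|^2\ge 0$, whose zeros are all of even order, so the IVT route fails there even in the ``generic'' case. Both problems disappear if you follow the paper and track $(\lambda^{(E)})^{-1}$ for a single divergent eigenvalue of the Hermitian matrix $\bm{S}_L^{(E)}$ instead of the determinant; the rest of your argument (Lemma~\ref{lem3} to convert the determinant zero into a divergence, continuity in $E$ of the solution of the initial value problem at $x=-L$, and the one-sided interval $E_1<E<E_2$ being contained in the two-sided neighborhood you obtain) is sound and matches the paper's logic.
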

\begin{proof}
From lemma~\ref{lem3},
If $\det\left(\bm{Y}_L^{(E_2)}\right)$ has a zero at $x = x_{E_2}$, 
$\bm{S}_L^{(E_2)}$ is divergent at $x = x_{E_2}$.
From lemma~\ref{lem2}, one of the eigenvalue of $\bm{S}_L^{(E_2)}$, which is denoted by $\lambda^{(E_2)}$, behaves
like $(\lambda^{(E_2)})^{-1} \simeq - (x - x_{E_2})$ near $x = x_{E_2}$.
Since $(\lambda^{(E)})^{-1}$ is a continuous function of $E$, 
the property that $(\lambda^{(E)})^{-1}$ has a zero at $x = x_E$ with 
$d(\lambda^{(E)})^{-1}/dx|_{x_E} = -1$ should hold for the range $E_1 < E < E_2$ with some $E_1 (< E_2)$.
\end{proof}

\begin{lemma}\label{lem5}
Let us denote $x_E$ as a zero of $\det\left(\bm{Y}_L^{(E)}\right)$ if it exists
and assume that $\det\left(\bm{Y}_L^{(E)}\right) \neq 0$ for $-\infty < x < x_E$.
Then, $x_E$ is monotonically non-increasing function of $E (<0)$ if it exists.
\end{lemma}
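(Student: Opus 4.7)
The plan is to compare $\bm{S}_L^{(E_1)}$ and $\bm{S}_L^{(E_2)}$ for $E_1 < E_2 \le 0$ using Lemma~\ref{lem1}, and then derive a contradiction from Lemma~\ref{lem2} if one assumes $x_{E_1} < x_{E_2}$. The key observation is that the boundary condition at $x\le -L$ fixes $\bm{Y}_L^{(E)} = e^{\sqrt{-E}\,x}\,\bm{I}$, so $\bm{S}_L^{(E)} = -\sqrt{-E}\,\bm{I}$ there. Consequently
\begin{align*}
\bm{S}_L^{(E_2)} - \bm{S}_L^{(E_1)}\bigr|_{x\le -L} = \bigl(\sqrt{-E_1}-\sqrt{-E_2}\bigr)\bm{I},
\end{align*}
which is a positive multiple of the identity since $E_1<E_2\le 0$. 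Thus at $x=-L$ the difference is positive definite, giving the base case required to apply Lemma~\ref{lem1}.

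First I would fix $E_1<E_2$ (both $<0$) for which the first zeros $x_{E_1}$ and $x_{E_2}$ both exist, and suppose for contradiction that $x_{E_1}<x_{E_2}$. By hypothesis, $\det(\bm{Y}_L^{(E_i)})\neq 0$ on $(-\infty,x_{E_i})$, so by Lemma~\ref{lem3} both $\bm{S}_L^{(E_1)}$ and $\bm{S}_L^{(E_2)}$ are continuous on $[-L,x_{E_1})$. Applying Lemma~\ref{lem1} on each interval $[-L,x_{E_1}-\epsilon]$ for $\epsilon>0$ and then letting $\epsilon\to 0$, I conclude that $\bm{S}_L^{(E_2)}-\bm{S}_L^{(E_1)}$ is positive definite throughout $[-L,x_{E_1})$; in particular $\bm{v}^\dag\bigl(\bm{S}_L^{(E_2)}-\bm{S}_L^{(E_1)}\bigr)\bm{v}>0$ for any nonzero $\bm{v}$ at every point of that interval.

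Next I would exploit the singularity of $\bm{S}_L^{(E_1)}$ at $x_{E_1}$. By Lemmas~\ref{lem2} and~\ref{lem3}, $\bm{S}_L^{(E_1)}$ has a unit eigenvector $\bm{e}(x)$ whose eigenvalue $\lambda^{(E_1)}(x)\sim -1/(x-x_{E_1})\to +\infty$ as $x\to x_{E_1}^-$. Since $x_{E_1}<x_{E_2}$, the matrix $\bm{S}_L^{(E_2)}$ is continuous, hence bounded in operator norm, on a closed neighborhood of $x_{E_1}$. Evaluating the quadratic form along $\bm{e}(x)$,
\begin{align*}
\bm{e}(x)^\dag\bigl(\bm{S}_L^{(E_2)}-\bm{S}_L^{(E_1)}\bigr)\bm{e}(x)
= \bm{e}(x)^\dag\bm{S}_L^{(E_2)}\bm{e}(x) - \lambda^{(E_1)}(x) \longrightarrow -\infty
\end{align*}
as $x\to x_{E_1}^-$, contradicting the positive definiteness established in the previous paragraph. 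Therefore $x_{E_1}\ge x_{E_2}$, which is exactly the statement that $x_E$ is a non-increasing function of $E$.

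The only delicate step is justifying that the positive definiteness proven by Lemma~\ref{lem1} on each closed subinterval $[-L,x_{E_1}-\epsilon]$ actually persists up to the limit $x\to x_{E_1}^-$ in the sense needed; however, since positive definiteness at each individual interior point is asserted by Lemma~\ref{lem1} directly (not as a uniform bound), the argument goes through pointwise without requiring any uniform lower bound on the eigenvalues. This is the main conceptual point to get right, but it is not a technical obstacle.
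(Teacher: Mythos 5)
Your proof is correct and follows essentially the same route as the paper: the identical base case $\bm{S}_L^{(E)}\big|_{x\le -L}=-\sqrt{-E}\,\bm{I}$, propagation of the positivity of $\bm{S}_L^{(E_2)}-\bm{S}_L^{(E_1)}$ via Lemma~\ref{lem1}, and a contradiction at $x_{E_1}$ from the divergence of $\bm{S}_L^{(E_1)}$ while $\bm{S}_L^{(E_2)}$ remains bounded. The only (immaterial) difference is the final step: the paper invokes the eigenvalue-ordering fact $\lambda_1^{(E_2)}>\lambda_1^{(E_1)}$ to force $\lambda_1^{(E_2)}$ to diverge, whereas you test the quadratic form against the diverging eigenvector of $\bm{S}_L^{(E_1)}$, which is an equivalent way to reach the same contradiction.
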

\begin{proof}
Since at $x \le -L$, 
\begin{align}
\bm{S}_L^{(E)} &= -{\rm diag}[\sqrt{-E}, \sqrt{-E}, \ldots, \sqrt{-E}],
\end{align}
for $E_1 < E_2 < 0$, all eigenvalues of $\bm{S}_L^{(E_2)} - \bm{S}_L^{(E_1)}$ are positive at $x \le -L$.
If we assume $x_{E_1} < x_{E_2}$, $\bm{S}_L^{(E_2)}$ and $\bm{S}_L^{(E_1)}$ are continuous in
$-\infty < x < x_{E_1}$.
From lemma~\ref{lem1}, 
all eigenvalue of $\bm{S}_L^{(E_2)} - \bm{S}_L^{(E_1)}$ are positive for 
$-\infty < x < x_{E_1}$.
However, since
$\lambda_1^{(E_2)} > \lambda_1^{(E_1)}$ holds\footnote{
We used a well known fact:
{\it 
for two Hermitian matrices $\bm{A}$ and $\bm{B}$,
if all eigenvalues of $\bm{A} - \bm{B}$ are positive,
the inequality $\lambda_i^{A} > \lambda_i^B$ holds, where
$\lambda_i^A$ and $\lambda_i^B$ are the $i$-th largest eigenvalues of $\bm{A}$ and $\bm{B}$, respectively.
}}
and $\lambda_1^{(E_1)} \to \infty$ at $x \to x_{E_1} -0$, 
$\lambda_1^{(E_2)}$ is divergent at $x \to x_{E_1} -0$.
This contradicts with the assumption $x_{E_1} < x_{E_2}$.
\end{proof}

\begin{lemma}\label{lem6}
$\det\left(\bm{Y}_L^{(E)}\right)$ does not have any zero for sufficiently large negative $E$.
\end{lemma}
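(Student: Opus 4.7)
The plan is to prove Lemma~\ref{lem6} by controlling the eigenvalues of $\bm{S}_L^{(E)}$ directly. By Lemma~\ref{lem3}, a zero of $\det(\bm{Y}_L^{(E)})$ is equivalent to a divergence of $\bm{S}_L^{(E)}$, and by Lemma~\ref{lem2} such a divergence forces one eigenvalue of $\bm{S}_L^{(E)}$ to tend to $+\infty$. It therefore suffices to show that for sufficiently negative $E$, every eigenvalue of $\bm{S}_L^{(E)}(x)$ stays uniformly negative, hence bounded, for all $-\infty < x < \infty$.

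First I would set $M := \sup_x \|\bm{V}(x)\|$, which is finite by the boundedness assumption on $\bm{V}$, and fix $E$ with $-E > M$. On $x \le -L$ the potential vanishes and $\bm{Y}_L^{(E)} = e^{\sqrt{-E}\,x}\bm{I}$ gives $\bm{S}_L^{(E)} = -\sqrt{-E}\,\bm{I}$ there. Repeating the eigenvalue derivation used in the proof of Lemma~\ref{lem2}, for any unit eigenvector $\bm{e}(x)$ of $\bm{S}_L^{(E)}(x)$ with eigenvalue $\lambda(x)$ one obtains, using Eq.~\eqref{riccatimulti},
\begin{align}
\frac{d\lambda}{dx} = \lambda^2 + E - \bm{e}^\dag \bm{V} \bm{e},
\notag
\end{align}
together with $|\bm{e}^\dag\bm{V}\bm{e}|\le M$, so $\lambda^2 + E - M \le d\lambda/dx \le \lambda^2 + E + M$. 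The next step is a barrier argument on the interval $I := [-\sqrt{-(E-M)},\,-\sqrt{-(E+M)}]$, which lies entirely in $(-\infty,0)$ and contains the initial value $-\sqrt{-E}$. At its upper endpoint $\lambda = -\sqrt{-(E+M)}$ the upper bound on $d\lambda/dx$ vanishes, forcing $d\lambda/dx \le 0$; at its lower endpoint $\lambda = -\sqrt{-(E-M)}$ the lower bound vanishes, forcing $d\lambda/dx \ge 0$. Applied to the largest and smallest eigenvalues of the Hermitian matrix $\bm{S}_L^{(E)}(x)$, which are locally Lipschitz in $x$ because $d\bm{S}_L^{(E)}/dx$ is bounded wherever $\bm{S}_L^{(E)}$ is, these barriers prevent any eigenvalue from leaving $I$. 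Hence $\bm{S}_L^{(E)}(x)$ remains bounded and uniformly negative, is therefore continuous everywhere, and Lemma~\ref{lem3} delivers $\det(\bm{Y}_L^{(E)}) \neq 0$ for all $x$.

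The main obstacle I anticipate is making the barrier argument airtight at the endpoints of $I$, where the derivative bounds are only non-strict (tangential contact). My fix is to replace $M$ by $M' = M + \epsilon$ for some $\epsilon > 0$, obtaining strict inequalities on a slightly enlarged interval $I' \supset I$ (equivalently, requiring $-E > M + \epsilon$ from the start). Non-smoothness of individual eigenvalue curves at crossings is handled in the standard way via one-sided Dini derivatives of $\lambda_{\max}(x)$ and $\lambda_{\min}(x)$, which remain Lipschitz even when individual eigenvalue branches fail to be differentiable.
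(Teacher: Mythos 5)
Your proof is correct and follows essentially the same route as the paper's: both reduce the statement to the eigenvalue Riccati equation $d\lambda/dx=\lambda^2+E-\bm{e}^\dag\bm{V}\bm{e}$ for the Hermitian matrix $\bm{S}_L^{(E)}$, start from $\lambda=-\sqrt{-E}$ at $x=-L$, and use a comparison/barrier argument with the boundedness of $\bm{V}$ to keep every eigenvalue finite, then invoke Lemmas~\ref{lem2} and \ref{lem3}. Your version is merely more explicit than the paper's (which factors $\lambda^2-(\bm{e}^\dag\bm{V}\bm{e}-E)$ and defers to the single-mode reference), in particular in writing down the trapping interval and addressing tangential contact and eigenvalue crossings.
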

\begin{proof}
Let $\bm{e}$ be a unit eigenvector of $\bm{S}_L^{(E)}$ with the eigenvalue $\lambda^{(E)}_L$:
\begin{align}
\bm{S}_L^{(E)} \bm{e} &= \lambda^{(E)}_L \bm{e}.
\end{align}
For a sufficiently large negative $E$, the function 
$\bm{e}^\dag \bm{V} \bm{e} - E$ is positive and bounded everywhere.
In this case, from the equation
\begin{align}
\frac{d \lambda^{(E)}_L}{d x} &=
\left(\lambda^{(E)}_L - \sqrt{\bm{e}^\dag \bm{V} \bm{e} - E}\right)
\left(\lambda^{(E)}_L + \sqrt{\bm{e}^\dag \bm{V} \bm{e} - E}\right),
\label{eq:lambda:lem6}
\end{align}
we can see that $\lambda^{(E)}_L$ 
is bounded from the same discussion as~\cite{Kimura:2017uor}.~\footnote{
$\lambda^{(E)}_L$ behaves $\lambda^{(E)}_L|_{x \le -L} = -\sqrt{-E}$, so
we need to check that it is bounded for $x > -L$.
For positive $\bm{e}^\dag \bm{V} \bm{e} - E$, once 
$\lambda^{(E)}_L$ takes a value less than $\sqrt{\bm{e}^\dag \bm{V} \bm{e} - E}$
it is bounded in $x > -L$, and this condition is satisfied at $x = -L$.
}
Thus, the eigenvalue of $\bm{S}_L^{(E)}$ is continuous everywhere
for a large negative $E$,
hence $\det\left(\bm{Y}_L^{(E)}\right)$ also does not have any zero for a large negative $E$.
\end{proof}

\begin{proposition}
$\bm{S}_L^{(0)}$ is continuous if there exists no bound state with $E \le 0$.
\end{proposition}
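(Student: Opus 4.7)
The plan is a proof by contradiction: suppose $\bm{S}_L^{(0)}$ fails to be continuous at some finite $x_0$. By Lemma~\ref{lem3}, this is equivalent to $\det(\bm{Y}_L^{(0)})(x_0) = 0$, and my strategy is to manufacture from this zero an $L^2$ bound state at some energy $E_{*} \le 0$, contradicting the hypothesis. Lemmas~\ref{lem4}--\ref{lem6} are precisely the tools needed to track how zeros of $\det(\bm{Y}_L^{(E)})$ migrate with $E$.

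Introduce the first-zero function $\phi(E) := \inf\{x : \det(\bm{Y}_L^{(E)})(x) = 0\}$, with $\phi(E) = +\infty$ if no zero exists, and set
\[
E_{*} := \sup\bigl\{E \le 0 : \phi(E) = +\infty\bigr\}.
\]
Lemma~\ref{lem6} yields $E_{*} > -\infty$, and Lemma~\ref{lem4} applied at $E_2 = 0$ produces an interval $(E_1, 0)$ on which $\phi < \infty$; together with $\phi(0) < \infty$ this forces $E_{*} \le E_1 < 0$. A second use of Lemma~\ref{lem4} shows $\phi(E_{*}) = +\infty$: if $\phi(E_{*})$ were finite, the lemma would populate a whole neighbourhood below $E_{*}$ with finite $\phi$, pushing $\sup\{\phi = +\infty\}$ strictly below $E_{*}$. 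In particular $\bm{Y}_L^{(E_{*})}$ is non-singular at every finite $x$. Lemma~\ref{lem5} combined with joint continuity of $\bm{Y}_L^{(E)}$ in $(x, E)$ then forces $\phi(E) \to +\infty$ as $E \to E_{*}^{+}$, since a finite limit $x_{*}$ would give $\det(\bm{Y}_L^{(E_{*})})(x_{*}) = 0$, contradicting $\phi(E_{*}) = +\infty$.

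Pick a sequence $E_k \downarrow E_{*}$ together with unit null vectors $\bm{c}_k$ satisfying $\bm{Y}_L^{(E_k)}(x_k)\bm{c}_k = 0$ for $x_k := \phi(E_k) \to +\infty$, extract a convergent subsequence $\bm{c}_k \to \bm{c}_{*}$ with $|\bm{c}_{*}| = 1$, and define $\bm{\Phi}(x) := \bm{Y}_L^{(E_{*})}(x)\bm{c}_{*}$. Since $\bm{Y}_L^{(E_{*})}$ is non-singular, $\bm{\Phi}$ is a non-trivial solution at energy $E_{*}$ decaying at $-\infty$. For $x \ge L$ I write $\bm{\Phi}_k(x) = \bm{a}_k e^{-\sqrt{-E_k}\,x} + \bm{b}_k e^{\sqrt{-E_k}\,x}$; the matching at $x = L$ together with continuity gives $\bm{a}_k \to \bm{a}_{*}$ and $\bm{b}_k \to \bm{b}_{*}$, while $\bm{\Phi}_k(x_k) = 0$ gives $\bm{b}_k = -\bm{a}_k e^{-2\sqrt{-E_k}\,x_k}$, which tends to $0$ because $E_{*} < 0$ strictly and $x_k \to +\infty$. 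Hence $\bm{b}_{*} = 0$, $\bm{\Phi}$ decays at $+\infty$, and we have produced an $L^2$ bound state at $E_{*} < 0$ --- the desired contradiction. I expect the main technical nuance to lie in this last limit: verifying the boundedness of $\bm{a}_k$ (which reduces to the convergence of $\bm{\Phi}_k(L)$ and $\bm{\Phi}_k'(L)$ to their $E_{*}$ counterparts) and ensuring that the earlier Lemma~\ref{lem4} step really secures $E_{*} < 0$ strictly, so that the prefactor $e^{-2\sqrt{-E_k}\,x_k}$ genuinely vanishes and the threshold case $E_{*} = 0$ never needs to be handled.
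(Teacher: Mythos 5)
Your proposal is correct and follows essentially the same route as the paper's own proof: both use Lemmas \ref{lem4}--\ref{lem6} to locate a critical energy ($E_{*}$, the paper's $E_0$) at which the first zero of $\det(\bm{Y}_L^{(E)})$ escapes to $+\infty$, and then extract from the large-$x$ decomposition of $\bm{Y}_L^{(E)}$ a direction along which the growing mode vanishes, yielding a negative-energy bound state and the desired contradiction. Your sequential null-vector argument with $\bm{c}_k \to \bm{c}_{*}$ is just a vector-level rephrasing of the paper's step $\det\bigl(\bm{{\cal A}}_L^{(E_0)}\bigr) = 0$ with eigenvector $\bm{u}$.
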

\begin{proof}
To show that 
$\bm{S}_L^{(0)}$ is continuous for a stable spacetime,
we need to show that there exists a negative energy eigenstate if $\det\left(\bm{Y}_L^{(0)}\right)$ has a zero.
From lemmas~\ref{lem4}, ~\ref{lem5} and \ref{lem6},
there exists $E_0 (<0)$ such that $x_{E} \to \infty$ when $E \to E_0 + 0$.
Thus, there exists $E_1$ with $E_0 < E_1 < 0$ such that 
for $E_0 < E < E_1$, 
$x_E$, a zero of $\det\left(\bm{Y}_L^{(E)}\right)$, locates in $x \ge L$ and
$\det\left(\bm{Y}_L^{(E)}\right) \neq 0$ for $-\infty < x < x_E$.
For $x \ge L$, $\bm{Y}_L^{(E)}$ can be written in the form
\begin{align}
\bm{Y}_L^{(E)}|_{x \ge L} &= 
(e^{2\sqrt{-E}x} \bm{{\cal A}}_L^{(E)} +  1 ) e^{-\sqrt{-E}x} \bm{{\cal B}}_L^{(E)},
\end{align}
where $\bm{{\cal A}}_L^{(E)}$ and $\bm{{\cal B}}_L^{(E)}$ are constant matrices.
Note that $\det\left(\bm{{\cal B}}_L^{(E)}\right) \neq 0$ because $\det\left(\bm{Y}_L^{(E)}\right)\Big|_{x > L}$ is not identically zero.
If $\det(\bm{Y}_L^{(E)} )= 0$ at $x = x_E$, the equation 
\begin{align}
\det\left( e^{2\sqrt{-E}x_E} \bm{{\cal A}}_L^{(E)} +  1 \right)= 0,
\end{align}
holds.
Since $x_E \to \infty$ for $E \to E_0$, 
$\det\left(\bm{{\cal A}}_L^{(E_0)}\right) = 0$ should be satisfied.
Let $\bm{u}$ be an eigenvector of $\bm{{\cal A}}_L^{(E_0)}$ with zero eigenvalue.
$\bm{Y}_L^{(E_0)}\left(\bm{{\cal B}}_L^{(E_0)}\right)^{-1} \bm{u}$ behaves like
\begin{align}
\bm{Y}_L^{(E_0)} \left(\bm{{\cal B}}_L^{(E_0)}\right)^{-1} \bm{u}\Big|_{x \le -L} &= 
e^{\sqrt{-E_0}x} \left(\bm{{\cal B}}_L^{(E_0)}\right)^{-1} \bm{u},
\\
\bm{Y}_L^{(E_0)} \left(\bm{{\cal B}}_L^{(E_0)}\right)^{-1} \bm{u}\Big|_{x \ge L} &= 
e^{-\sqrt{-E_0}x} \bm{u}.
\end{align}
Thus, $\bm{Y}_L^{(E_0)} \left(\bm{{\cal B}}_L^{(E_0)}\right)^{-1} \bm{u}$ 
is a negative energy eigenstate of the Schr\"odinger equation.
\end{proof}

\section{Robustness of the $S$-deformation method}
\label{appendix:robustness}

In a single mode case, 
the robustness of the $S$-deformation method was discussed in~\cite{Kimura:2018eiv},
{\it i.e.,} the general regular $S$-deformation has $1$ parameter degree of freedom.
In this section, we extend the discussion in~\cite{Kimura:2018eiv} to coupled systems when the 
potential $\bm{V}$ has a compact support.

In this section, since we only consider $E = 0$ case, we omit to write the superscript $(0)$, {\it e.g.,}
$\bm{Y}_L$ instead of $\bm{Y}_L^{(0)}$.
We assume that there exists no bound state with $E \le 0$, then 
$\det(\bm{Y}_L)$ and $\det(\bm{Y}_R)$ do not have zero in $-\infty < x < \infty$.
The general solution of Eq.~\eqref{schrodingereqY} with $E = 0$ is
\begin{align}
\bm{Y} = (\bm{Y}_L \bm{M} + \bm{Y}_R) \bm{N}.
\end{align}
Since we are interested in regular $S$-deformations which correspond to
the case with $\det(\bm{Y}) \neq 0$, 
we only need to consider $\det(\bm{N}) \neq 0$.
In that case, all $\bm{N}$ gives the same $\bm{S} = -(d\bm{Y}/dx)\bm{Y}^{-1}$.
Thus, it is enough to study the case that $\bm{N}$ is the identity matrix.

\subsection{All eigenvalues of $\bm{S}_R  - \bm{S}_L$ are positive}
In $x \le -L$, from the behavior of $\bm{Y}_L$ and $\bm{Y}_R$,
\begin{align}
\bm{Y}_L|_{x \le -L} &= {\rm diag}[1, \ldots, 1],
\\
\bm{Y}_R|_{x \le - L} &= 
\bm{{\cal C}}_R +  x \bm{{\cal D}}_R,
\end{align}
$\bm{S}_L$ and $\bm{S}_R$ becomes
\begin{align}
\bm{S}_L |_{x \le -L} &= {\rm diag}[0,\ldots,0],
\\
\bm{S}_R |_{x \le -L} &= - (\bm{{\cal D}}_R^{-1}\bm{{\cal C}}_R +  x )^{-1}.
\end{align}
Note that $\det(\bm{{\cal D}}_R) \neq 0$ because we assume that there exists no bound state with $E \le 0$.
Since $\bm{S}_R$ is bounded, 
$\bm{{\cal D}}_R^{-1}\bm{{\cal C}}_R +  x$ does not have zero eigenvalue in $x \le -L$,
hence $\bm{{\cal D}}_R^{-1}\bm{{\cal C}}_R  - L$ is negative definite 
and $\bm{S}_R |_{x < -L}$ is positive definite in $x \le -L$.
Thus, all eigenvalues of $\bm{S}_R - \bm{S}_L$ are positive in $x \le -L$.

Suppose that $\bm{S}_R - \bm{S}_L$ has a zero eigenvalue at $x = a (> -L)$.
Also, let $\bm{e}$ be a unit eigenvector of $\bm{S}_R - \bm{S}_L$ with the eigenvalue $\Lambda$,
\begin{align}
(\bm{S}_R - \bm{S}_L) \bm{e} = \Lambda \bm{e},
\end{align}
then $\Lambda = 0$ at $x = a$.
From this equation, we can derive
\begin{align}
\frac{d\Lambda}{dx} &= \Lambda \left(
\Lambda 
+
2 \bm{e}^\dag \bm{S}_L \bm{e}
\right).
\end{align}
Thus, if $\Lambda = 0$ at $x = a$, $\Lambda = 0$ everywhere 
from the uniqueness of the ordinary differential equation.
However, this contradicts with the fact that $\Lambda > 0$ in $x \le -L$ as shown above.

\subsection{general regular $\bm{S}$}
Defining $\bm{\rho}_{LR}$ by
\begin{align}
\bm{\rho}_{LR} = 
\frac{d\bm{Y}_L^\dag}{dx} \bm{Y}_R - 
 \bm{Y}^\dag_L \frac{d\bm{Y}_R}{dx},
\end{align}
the relation
\begin{align}
\bm{S}_R - \bm{S}_L = (\bm{Y}^\dag_L)^{-1} \bm{\rho}_{LR} (\bm{Y}_R )^{-1} =: \bm{W}^{-1},
\end{align}
holds. From $\bm{Y} = \bm{Y}_L \bm{M} + \bm{Y}_R$,
\begin{align}
\bm{W}_L := 
\bm{Y} \bm{\rho}_{LR}^{-1}\bm{Y}_L^\dag &= 
\bm{Y}_L \bm{M} \bm{\rho}_{LR}^{-1} \bm{Y}_L^\dag  
+ \bm{Y}_R \bm{\rho}_{LR}^{-1} \bm{Y}_L^\dag
\notag\\&=
\bm{Y}_L \bm{M} \bm{\rho}_{LR}^{-1} \bm{Y}_L^\dag  
+ \bm{W}.
\label{eq:wl}
\end{align}
From the relation, 
\begin{align}
\frac{d\bm{Y}_L^\dag}{dx} \bm{Y} - 
 \bm{Y}_L^\dag \frac{d\bm{Y}}{dx} 
&= 
\bm{\rho}_{LR},
\end{align}
where we use $(d\bm{Y}_L^\dag/dx)\bm{Y}_L - \bm{Y}_L^\dag (d\bm{Y}_L/dx) = 0$, 
we can show
\begin{align}
\bm{S}  -\bm{S}_L 
&= 
(\bm{Y}_L^\dag)^{-1}\bm{\rho}_{LR} \bm{Y}^{-1} = \bm{W}_L^{-1}.
\end{align}
Since $\bm{S}$ is Hermitian, $\bm{W}_L$ should also be Hermitian.
From Eq.~\eqref{eq:wl}, this condition is that 
$\bm{M} \bm{\rho}_{LR}^{-1}$ is Hermitian.
If $\bm{M} \bm{\rho}_{LR}^{-1}$ is non-negative definite,
from Eq.~\eqref{eq:wl}, 
$\bm{Y} \bm{\rho}_{LR}^{-1}\bm{Y}_L^\dag$ becomes positive definite.
In that case, $\bm{Y}$ does not have a zero eigenvalue 
because if $\bm{e}_1^\dag \bm{Y} = 0$ with a vector $\bm{e}_1$ at some point, it contradicts with
$\bm{e}_1^\dag \bm{Y} \bm{\rho}_{LR}^{-1}\bm{Y}_L^\dag \bm{e}_1 > 0$.
Thus, $\bm{M} \bm{\rho}_{LR}^{-1} \ge 0$ is the sufficient condition so that
$\bm{S}$ is regular everywhere.

From the relations 
$\bm{Y}_L \bm{M} \bm{\rho}_{LR}^{-1} \bm{Y}_L^\dag = {\cal O}(x^2)$ 
and $\bm{W} = {\cal O}(x)$ at $x \to \infty$,
and
$\bm{Y}_L \bm{M} \bm{\rho}_{LR}^{-1} \bm{Y}_L^\dag = {\cal O}(x^{0})$ 
and $\bm{W} = {\cal O}(x)$ at $x \to -\infty$,
if $\bm{M} \bm{\rho}_{LR}^{-1}$ has a negative eigenvalue, 
$\bm{W}_L$ has a negative eigenvalue near $x \to \infty$ and 
$\bm{W}_L$ is positive definite near $x \to -\infty$.
Thus, $\bm{W}_L$ should has a zero eigenvalue with an eigenvector $\bm{e}$ somewhere.
However, this implies $\bm{Y} (\bm{\rho}_{LR}^{-1}\bm{Y}_L^\dag \bm{e}) = 0$,
and then $\bm{Y}$ has a zero eigenvalue. This is a contradiction.
Thus, $\bm{M} \bm{\rho}_{LR}^{-1} \ge 0$ is the necessarily and 
sufficient condition so that $\bm{S}$ is regular everywhere.

\subsection{Robustness of the $S$-deformation method}
\label{appendix:sub:robustness}

We have already obtained that all eigenvalues of $\bm{S} - \bm{S}_L$ are positive in the previous subsection.
We repeat the same discussion for $\bm{S}_R - \bm{S}$.
From the equation $\bm{Y}^\dag = \bm{M}^\dag \bm{Y}_L^\dag + \bm{Y}_R^\dag$, the relation
\begin{align}
\bm{W}_R := 
\bm{Y}_R \bm{\rho}_{LR}^{-1} (\bm{M}^\dag)^{-1} \bm{Y}^\dag &= 
\bm{Y}_R \bm{\rho}_{LR}^{-1} \bm{Y}_L^\dag
+
\bm{Y}_R \bm{\rho}_{LR}^{-1} (\bm{M}^\dag)^{-1} 
\bm{Y}_R^\dag
\notag\\&= 
\bm{W}
+
\bm{Y}_R \bm{\rho}_{LR}^{-1} (\bm{M}^\dag)^{-1} 
\bm{Y}_R^\dag,
\end{align}
holds.
From the relation, 
\begin{align}
\frac{d\bm{Y}^\dag}{dx} \bm{Y}_R - 
 \bm{Y}^\dag \frac{d\bm{Y}_R}{dx} 
&= 
\bm{M}^\dag \bm{\rho}_{LR},
\end{align} 
we can show
\begin{align}
\bm{S}_R  -\bm{S}
&= 
(\bm{Y}^\dag)^{-1}\bm{M}^\dag \bm{\rho}_{LR} \bm{Y}_R^{-1} = \bm{W}_R^{-1}.
\end{align}
From the same discussion above, we impose 
that $\bm{M}^\dag \bm{\rho}_{LR}$ is Hermitian
and $\bm{M}^\dag \bm{\rho}_{LR}$ is positive definite\footnote{
This is same as that $\bm{M} \bm{\rho}_{LR}^{-1}$ is Hermitian
and positive definite.} so that $\bm{S}$ is Hermitian and continuous.
Thus, $\bm{S}_R - \bm{S}$ is positive definite everywhere for regular $\bm{S}$.

If we solve Eq.~\eqref{sdefeqmulti} with the boundary condition
such that all eigenvalues of $\bm{S}_R - \bm{S}$ and $\bm{S} - \bm{S}_L$ are positive,
this relations hold everywhere.
In that case, all eigenvalues satisfy $\lambda_i^{L} < \lambda_i < \lambda_i^{R}$, where
$\lambda_i^{L}, \lambda_i, \lambda_i^{R}$ are $i$-th largest eigenvalue of 
$\bm{S}_L, \bm{S}, \bm{S}_R$, respectively.
This is the reason why we can find the $S$-deformation without fine-tuning 
in numerically solving Eq.~\eqref{sdefeqmulti}.

\section{regular $\bm{S}$ for positive definite $\bm{V}$}
\label{appendix:positivedefinite}

In this section, we consider the case with positive definite $\bm{V}$, where the system is 
manifestly stable. In this case, the following proposition holds.

\begin{proposition}\label{SforpositivedefiniteV}
Let us assume that $\bm{V}$ is positive definite,
and let $\bm{S}$ be a solution of Eq.~\eqref{sdefeqmulti}.
If all components of $\bm{S}$ are zero at a point,
$\bm{S}$ is Hermitian and bounded everywhere.
\end{proposition}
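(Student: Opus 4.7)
The proof splits into two claims: first that $\bm{S}$ is Hermitian, then that it remains bounded everywhere. Let $x_0$ denote the point at which $\bm{S}(x_0)=0$.

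Hermiticity follows exactly from the argument of footnote~\ref{foot8}. Subtracting the Hermitian conjugate of Eq.~\eqref{sdefeqmulti} from itself gives the linear homogeneous matrix ODE
\begin{equation*}
\frac{d(\bm{S}^\dag-\bm{S})}{dx}=(\bm{S}^\dag-\bm{S})\bm{S}^\dag+\bm{S}(\bm{S}^\dag-\bm{S}),
\end{equation*}
and since $\bm{S}^\dag-\bm{S}$ vanishes at $x_0$, uniqueness forces $\bm{S}^\dag=\bm{S}$ on the maximal interval of existence $(x_-,x_+)\ni x_0$.

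For boundedness the plan is to combine Lemma~\ref{lem2} with a sign constraint on the spectrum of $\bm{S}$ deduced from positive definiteness of $\bm{V}$. Specifically, I claim that every eigenvalue of the Hermitian matrix $\bm{S}(x)$ is non-positive for $x\in[x_0,x_+)$ and non-negative for $x\in(x_-,x_0]$. These two statements are symmetric; I sketch only the first. Suppose for contradiction that $\lambda_{\max}(x)>0$ for some $x>x_0$ and let $x_1$ be the infimum of such $x$. Continuity of $\lambda_{\max}(x)$ gives $\lambda_{\max}(x_1)=0$. For any unit null eigenvector $\bm{u}$ of $\bm{S}(x_1)$,
\begin{equation*}
\bm{u}^\dag\frac{d\bm{S}}{dx}\bigg|_{x_1}\bm{u}=\bm{u}^\dag\bigl(\bm{S}(x_1)^2-\bm{V}(x_1)\bigr)\bm{u}=-\bm{u}^\dag\bm{V}(x_1)\bm{u}<0,
\end{equation*}
using positive definiteness of $\bm{V}(x_1)$. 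The envelope formula for the top eigenvalue of a $C^1$ Hermitian family, $D^+\lambda_{\max}(x_1)=\max\{\bm{u}^\dag\dot{\bm{S}}(x_1)\bm{u}:\bm{u}\text{ unit null eigenvector of }\bm{S}(x_1)\}$, is then strictly negative, contradicting $\lambda_{\max}>0$ immediately to the right of $x_1$.

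With the sign constraint in hand, if $x_+<\infty$ then by Lemma~\ref{lem2} some eigenvalue of $\bm{S}$ would tend to $+\infty$ as $x\to x_+^-$, violating $\lambda\le 0$ on $[x_0,x_+)$. Hence $x_+=\infty$, and the mirror argument gives $x_-=-\infty$; boundedness then follows from the sign constraint together with Lemma~\ref{lem2}. The one nontrivial technical ingredient is the Dini-derivative formula for $\lambda_{\max}$ at a possibly degenerate eigenvalue; I would either cite the standard envelope theorem in convex analysis or prove it in place by comparing $\lambda_{\max}(x_1+h)$ with Rayleigh quotients $\bm{u}^\dag\bm{S}(x_1+h)\bm{u}$ for $\bm{u}$ ranging over the null eigenspace of $\bm{S}(x_1)$ and expanding to first order in $h$.
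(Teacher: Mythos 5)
Your proof is correct, and its core mechanism coincides with the paper's: Hermiticity is propagated from the point where $\bm{S}$ vanishes via the linear ODE of footnote~\ref{foot8}, and boundedness is extracted from the scalar Riccati dynamics $d\lambda/dx=\lambda^2-\bm{e}^\dag\bm{V}\bm{e}$ of the eigenvalues together with positive definiteness of $\bm{V}$. The difference is in how the second half is packaged. The paper follows each eigenvalue branch and directly invokes the single-mode trapping argument of~\cite{Kimura:2017uor}: once $\lambda=0$ at a point and $\bm{e}^\dag\bm{V}\bm{e}>0$, the eigenvalue is confined between barriers of order $\pm\sqrt{\bm{e}^\dag\bm{V}\bm{e}}$ in both directions, which yields the two-sided bound in one stroke. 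You instead prove a one-sided sign constraint ($\lambda_{\max}\le 0$ to the right of $x_0$, $\lambda_{\min}\ge 0$ to the left) by a first-crossing argument, and then exclude finite blow-up via Lemma~\ref{lem2}. Your version is more careful on a point the paper glosses over: eigenvalue branches need not be differentiable at crossings, and the Dini-derivative/envelope formula $D^+\lambda_{\max}(x_1)=\max_{\bm{u}}\bm{u}^\dag\dot{\bm{S}}(x_1)\bm{u}$ over the (possibly degenerate) null eigenspace handles this cleanly, so this is a genuine gain in rigor. One small caveat: your sign constraint plus Lemma~\ref{lem2} establishes regularity (no divergence at any finite $x$) but only a one-sided bound on the spectrum as $x\to\pm\infty$; to get literal two-sided boundedness, add the mirror barrier at $\lambda=-\sqrt{\sup_x\bm{e}^\dag\bm{V}\bm{e}}$ (using that $\bm{V}$ is assumed bounded), where $d\lambda/dx=\lambda^2-\bm{e}^\dag\bm{V}\bm{e}\ge 0$ forbids further descent. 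This is exactly the second half of the paper's trapping argument and is a one-line addition by the same technique you already use.
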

\begin{proof}
Suppose that $x = a$ be a point such that all components of $\bm{S}$ are zero there,
then all eigenvalues are also zero at $x = a$.
As shown in footnote~\ref{foot8}, if $\bm{S}$ is Hermitian at a point, 
the solution of Eq.~\eqref{sdefeqmulti} is Hermitian everywhere.
Let $\bm{e}$ and $\lambda$ be a unit eigenvector and an eigenvalue of $\bm{S}$, respectively.
Then, the relation
\begin{align}
\frac{d\lambda}{dx} = \lambda^2 - \bm{e}^\dag \bm{V} \bm{e}
\label{eq:lambdasdef}
\end{align}
holds and $\lambda = 0$ at $x = a$.
From the same discussion as~\cite{Kimura:2017uor},
once $\lambda = 0 $ at a point,
$\lambda$ is bounded above and below everywhere for positive $\bm{e}^\dag \bm{V} \bm{e}$.
Since this discussion holds for all eigenvalues of $\bm{S}$, 
the solution of Eq.~\eqref{sdefeqmulti} is bounded everywhere.
\end{proof}

The discussion in Appendix.~\ref{appendix:robustness} suggests that
all eigenvalues of $\bm{S}_R - \bm{S}$ and $\bm{S} - \bm{S}_L$ are positive everywhere for regular $\bm{S}$.
Thus, from proposition.~\ref{SforpositivedefiniteV} we can expect that
$\bm{S}_R$ is positive definite, and $\bm{S}_L$ is negative definite for positive definite $\bm{V}$.

If the potential is positive definite only at large $x$, 
the similar discussion holds in the asymptotic region, hence
we can expect that the solution of Eq.~\eqref{sdefeqmulti} 
with the boundary condition $\bm{S} = 0$ at large $x$ becomes regular for a stable spacetime.
This is just a rough discussion, but it is worth to try this boundary condition 
if $\bm{V}$ is positive definite at large $x$.
We should note that the numerical studies in Sec.~\ref{sec:application} support this.

\section{approximate solution in $n = 2$}
\label{approximatesolutionsdef}
Usually, the potential $\bm{V}$ is proportional to $r - r_H$ near the horizon of a non-extremal black hole, 
where $r_H$ is the horizon radius.
Since the relation between $x$ and $r$ near the horizon is $r/r_H \simeq 1 + e^{x/r_H}$,
the potential is rapidly decaying to zero at $x \to -\infty$ as 
$\bm{V} \propto  e^{x/r_H} \to 0$. 
In this case, we can find approximate solutions of the equation $\bm{V} + d{\bm S}/dx - \bm{S}^2= 0$ 
as~\footnote{When $\bm{V} = 0$, the general solution of $d^2\bm{Y}/dx^2 = 0$ is
$\bm{Y} = \bm{R}_{\Theta} (x + \bm{{\cal A}})\bm{{\cal B}}$ with constant matrices 
$\bm{R}_\Theta, \bm{{\cal A}}$ and $\bm{{\cal B}}$ ($\det(\bm{{\cal B}}) \neq 0$). 
For the latter convenience, we put the rotation matrix 
$\bm{R}_\Theta$
\begin{align}
\bm{R}_\Theta = 
\begin{pmatrix}
\cos \Theta  & -\sin \Theta\\
\sin \Theta             & \cos \Theta
\end{pmatrix}.
\end{align}
We should note that $\bm{{\cal B}}$ does not appear in $\bm{S} = - (d\bm{Y}/dx) \bm{Y}^{-1}$.
Imposing $\bm{S}$ to be symmetric, $\bm{{\cal A}}$ also should be 
symmetric. Since the non-diagonal part of $\bm{{\cal A}}$ can be removed by 
the degrees of freedom of $\bm{R}_\Theta$ and $\bm{{\cal B}}$, 
{\it i.e.,} $\bm{R}_\Theta \to \bm{R}_\Theta \bm{R}_{\Theta_1}$ and $\bm{{\cal B}} \to 
\bm{R}_{\Theta_1}^{-1} \bm{{\cal B}}$ and choose $\Theta_1$ so that the non-diagonal part vanishes, 
we only need to consider $\bm{Y} = \bm{R}_\Theta {\rm diag}[x - c_+, x - c_-]\bm{{\cal B}}$.
Defining $\theta_0 = 2\Theta$, 
we obtain Eqs.~\eqref{s11approx}-\eqref{s22approx} from $\bm{S} = - (d\bm{Y}/dx) \bm{Y}^{-1}$.
}
\begin{align}
S_{11} &= \frac{-2x + c_+ + c_- - (c_+ - c_-)\cos\theta_0}{2(x - c_+)(x - c_-)},
\label{s11approx}
\\
S_{12} &= \frac{-(x_+ - x_-)\sin \theta_0}{2(x - c_+)(x - c_-)},
\label{s12approx}
\\
S_{22} &= \frac{-2x + c_+ + c_- + (c_+ - c_-)\cos\theta_0}{2(x - c_+)(x - c_-)}.
\label{s22approx}
\end{align}
In $x \to - \infty$, we can see 
$S_{11} \simeq -1/x, S_{12} \simeq (-(x_+ - x_-)\sin \theta_0)/(2x^2)$ and $S_{22} \simeq -1/x$.
These equations can be written in the form
\begin{align}
c_\pm &= x - \frac{S_{11} + S_{22} \pm \sqrt{4 S_{12}^2 + (S_{11} - S_{22})^2}}{2(S_{12}^2 - S_{11} S_{22})},
\label{eq:cpm}
\\
\cos\theta_0 &= \frac{-S_{11} + S_{22}}{\sqrt{4 S_{12}^2 + (S_{11} - S_{22})^2}}.
\label{eq:thetazero}
\end{align}
For a numerical solution $\bm{S}$,
if the right hand sides of the above equations take constants in the asymptotic region, 
it implies that Eqs.~\eqref{s11approx}-\eqref{s22approx} become a good approximation there.
Also, if $x < c_\pm$ are satisfied in the asymptotic region, 
it implies that $\bm{S}$ is bounded in the asymptotic region.

\section{non-existence of the zero mode}
\label{appendix:zeromode}

Similarly to the case of the single degree of freedom~\cite{Kimura:2018eiv}, the following 
proposition on the zero mode, {\it i.e.,} the non-trivial solution of Eq.~\eqref{multischrodingereq} with zero energy
which is decaying (or constant) at both $x\to -\infty$ and $x\to \infty$,
holds:
\begin{proposition}\label{zeromode}
Suppose that there exist two different regular solutions of Eq.~\eqref{sdefeqmulti}, $\bm{S}_1$ and $\bm{S}_2$.
If $\det(\bm{S}_1 - \bm{S}_2)$ is not identically zero, no zero mode exists.
\end{proposition}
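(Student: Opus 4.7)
The proof plan is to use the $S$-deformation identity twice, once with $\bm{S}=\bm{S}_1$ and once with $\bm{S}=\bm{S}_2$, on the hypothetical zero mode $\bm{\Phi}$. Assume for contradiction that there is a non-trivial zero mode, i.e.\ a solution of Eq.~\eqref{multischrodingereq} at $E=0$ that is decaying (or constant) at $x\to\pm\infty$. Apply Eq.~\eqref{multisdef1} with $E=0$ and $\bm{S}=\bm{S}_i$. By assumption $\bm{S}_i$ is a continuous Hermitian solution of Eq.~\eqref{sdefeqmulti}, so the deformed potential vanishes, $\tilde{\bm{V}}=0$. The boundary term is handled as in the main text: for a decaying mode both $\bm{\Phi}$ and $d\bm{\Phi}/dx$ vanish at infinity, and for a constant mode $d\bm{\Phi}/dx\to 0$, while the relation $\bm{V}=\bm{S}_i^2-d\bm{S}_i/dx$ together with $\bm{V}\to 0$ forces $\bm{S}_i\to 0$ asymptotically, so $\bm{\Phi}^\dag \bm{S}_i\bm{\Phi}\to 0$ as well.

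Consequently the identity reduces to
\begin{equation}
\int dx\,\left|\frac{d\bm{\Phi}}{dx}+\bm{S}_i\bm{\Phi}\right|^2=0,
\label{eq:zeromodeplan1}
\end{equation}
which forces $d\bm{\Phi}/dx+\bm{S}_i\bm{\Phi}=0$ pointwise for both $i=1$ and $i=2$. Subtracting the two first-order equations yields
\begin{equation}
(\bm{S}_1-\bm{S}_2)\bm{\Phi}=0 \qquad \text{for all } x.
\label{eq:zeromodeplan2}
\end{equation}

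Now invoke the hypothesis that $\det(\bm{S}_1-\bm{S}_2)$ is not identically zero. Then there exists a point $x_0$ at which $\bm{S}_1-\bm{S}_2$ is invertible, and \eqref{eq:zeromodeplan2} evaluated at $x_0$ gives $\bm{\Phi}(x_0)=0$. Since $\bm{\Phi}$ obeys the first-order linear homogeneous ODE $d\bm{\Phi}/dx=-\bm{S}_1\bm{\Phi}$ with continuous coefficient, the standard uniqueness theorem forces $\bm{\Phi}\equiv 0$, contradicting non-triviality.

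The argument is essentially a two-line consequence of the $S$-deformation identity once one commits to the right viewpoint, so the only real subtlety I anticipate is a careful justification that the boundary term in Eq.~\eqref{multisdef1} vanishes for the ``constant-at-infinity'' case; this is where one has to use the asymptotic flatness of $\bm{V}$ together with $\tilde{\bm{V}}=0$ to conclude $\bm{S}_i\to 0$ and that the cross term $\bm{\Phi}^\dag\bm{S}_i\bm{\Phi}$ decays. All other steps, the reduction to the first-order equation and the uniqueness argument, are straightforward.
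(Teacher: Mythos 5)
Your proposal is correct and follows essentially the same route as the paper's own proof: apply the $S$-deformation identity \eqref{multisdef1} at $E=0$ with each of $\bm{S}_1,\bm{S}_2$, deduce $d\bm{\Phi}/dx+\bm{S}_i\bm{\Phi}=0$, subtract to get $(\bm{S}_1-\bm{S}_2)\bm{\Phi}=0$, and conclude from the determinant hypothesis. You in fact supply more detail than the paper (the vanishing of the boundary term for constant modes and the ODE-uniqueness step forcing $\bm{\Phi}\equiv 0$), which the paper leaves implicit.
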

\begin{proof}
If there exists a zero mode $\bm{\Phi}$, it satisfies
\begin{align}
\frac{d\bm{\Phi}}{dx} + \bm{S}_i\bm{\Phi} = 0, ~ (i = 1, 2)
\label{zeroenergyphi}
\end{align}
from Eq.~\eqref{multisdef1},
and hence, $(\bm{S}_1 - \bm{S}_2)\bm{\Phi}=0$ holds.
However, from the assumption this cannot be satisfied.
\end{proof}

If we obtain two different $\bm{S}$ from the boundary conditions $\bm{S} = 0$ at two large $x$,
usually, the assumption ``$\det(\bm{S}_1 - \bm{S}_2)$ is not identically zero''
is satisfied.


\begin{thebibliography}{99}

\bibitem{Regge:1957td} 
  T.~Regge and J.~A.~Wheeler,
  Phys.\ Rev.\  {\bf 108}, 1063 (1957).


\bibitem{Vishveshwara:1970cc} 
  C.~V.~Vishveshwara,
  Phys.\ Rev.\ D {\bf 1}, 2870 (1970).


\bibitem{Zerilli:1970se} 
  F.~J.~Zerilli,
  Phys.\ Rev.\ Lett.\  {\bf 24}, 737 (1970).


\bibitem{Zerilli:1971wd} 
  F.~J.~Zerilli,
  Phys.\ Rev.\ D {\bf 2}, 2141 (1970).


\bibitem{Kodama:2003jz} 
  H.~Kodama and A.~Ishibashi,
  Prog.\ Theor.\ Phys.\  {\bf 110}, 701 (2003)
  [hep-th/0305147].


\bibitem{Ishibashi:2003ap} 
  A.~Ishibashi and H.~Kodama,
  Prog.\ Theor.\ Phys.\  {\bf 110}, 901 (2003)
  [hep-th/0305185].


\bibitem{Kodama:2003kk}
  H.~Kodama and A.~Ishibashi,
  Prog.\ Theor.\ Phys.\  {\bf 111}, 29 (2004)
  [hep-th/0308128].


\bibitem{Dotti:2004sh} 
  G.~Dotti and R.~J.~Gleiser,
  Class.\ Quant.\ Grav.\  {\bf 22}, L1 (2005)
  [gr-qc/0409005].


\bibitem{Dotti:2005sq} 
  G.~Dotti and R.~J.~Gleiser,
  Phys.\ Rev.\ D {\bf 72}, 044018 (2005)
  [gr-qc/0503117].


\bibitem{Gleiser:2005ra} 
  R.~J.~Gleiser and G.~Dotti,
  Phys.\ Rev.\ D {\bf 72}, 124002 (2005)
  [gr-qc/0510069].


\bibitem{Takahashi:2010ye} 
  T.~Takahashi and J.~Soda,
  Prog.\ Theor.\ Phys.\  {\bf 124}, 911 (2010)
  [arXiv:1008.1385 [gr-qc]].


\bibitem{Takahashi:2009dz} 
  T.~Takahashi and J.~Soda,
  Phys.\ Rev.\ D {\bf 79}, 104025 (2009)
  [arXiv:0902.2921 [gr-qc]].


\bibitem{Takahashi:2009xh} 
  T.~Takahashi and J.~Soda,
  Phys.\ Rev.\ D {\bf 80}, 104021 (2009)
  [arXiv:0907.0556 [gr-qc]].


\bibitem{Beroiz:2007gp} 
  M.~Beroiz, G.~Dotti and R.~J.~Gleiser,
  Phys.\ Rev.\ D {\bf 76}, 024012 (2007)
  [hep-th/0703074 [HEP-TH]].


\bibitem{Takahashi:2010gz} 
  T.~Takahashi and J.~Soda,
  Prog.\ Theor.\ Phys.\  {\bf 124}, 711 (2010)
  [arXiv:1008.1618 [gr-qc]].


\bibitem{Kimura:2017uor} 
  M.~Kimura,
  Class.\ Quant.\ Grav.\  {\bf 34}, no. 23, 235007 (2017)
  [arXiv:1706.01447 [gr-qc]].


\bibitem{Kimura:2018eiv} 
  M.~Kimura and T.~Tanaka,
  arXiv:1805.08625 [gr-qc]. (to appear in Class.\ Quant.\ Grav.)


\bibitem{Winstanley:2001bs} 
  E.~Winstanley and O.~Sarbach,
  Class.\ Quant.\ Grav.\  {\bf 19}, 689 (2002)
  [gr-qc/0111039].


\bibitem{Baxter:2015gfa} 
  J.~E.~Baxter and E.~Winstanley,
  J.\ Math.\ Phys.\  {\bf 57}, no. 2, 022506 (2016)
  [arXiv:1501.07541 [gr-qc]].


\bibitem{Sarbach:2001mc} 
  O.~Sarbach and E.~Winstanley,
  Class.\ Quant.\ Grav.\  {\bf 18}, 2125 (2001)
  [gr-qc/0102033].


\bibitem{Molina:2010fb} 
  C.~Molina, P.~Pani, V.~Cardoso and L.~Gualtieri,
  Phys.\ Rev.\ D {\bf 81}, 124021 (2010)
  [arXiv:1004.4007 [gr-qc]].


\bibitem{Nishikawa:2010zg} 
  R.~Nishikawa and M.~Kimura,
  Class.\ Quant.\ Grav.\  {\bf 27}, 215020 (2010)
  [arXiv:1005.1367 [hep-th]].


\bibitem{Cardoso:2018ptl} 
  V.~Cardoso, M.~Kimura, A.~Maselli and L.~Senatore,
  arXiv:1808.08962 [gr-qc].


\bibitem{Amann:1995}
H.~Amann, P.~Quittner
J. Math. Phys. {\bf 36} (1995) 4553


\bibitem{Aybat:2010sn} 
  S.~M.~Aybat and D.~P.~George,
  JHEP {\bf 1009}, 010 (2010)
  [arXiv:1006.2827 [hep-th]].


\bibitem{Kimura:2018nxk} 
  M.~Kimura,
  Phys.\ Rev.\ D {\bf 98}, 024048 (2018)
  [arXiv:1807.05029 [gr-qc]].


\bibitem{Garaud:2007ti} 
  J.~Garaud and M.~S.~Volkov,
  Nucl.\ Phys.\ B {\bf 799}, 430 (2008)
  [arXiv:0712.3589 [hep-th]].


\bibitem{Chen:2017diy} 
  F.~W.~Chen, B.~M.~Gu and Y.~X.~Liu,
  Eur.\ Phys.\ J.\ C {\bf 78}, no. 2, 131 (2018)
  [arXiv:1702.03497 [hep-th]].


\bibitem{Garaud:2010ng} 
  J.~Garaud and M.~S.~Volkov,
  Nucl.\ Phys.\ B {\bf 839}, 310 (2010)
  [arXiv:1005.3002 [hep-th]].


\bibitem{Ishihara:2005dp} 
  H.~Ishihara and K.~Matsuno,
  Prog.\ Theor.\ Phys.\  {\bf 116}, 417 (2006)
  [hep-th/0510094].


\bibitem{Kurita:2007hu} 
  Y.~Kurita and H.~Ishihara,
  Class.\ Quant.\ Grav.\  {\bf 24}, 4525 (2007)
  [arXiv:0705.0307 [hep-th]].


\bibitem{Kimura:2007cr} 
  M.~Kimura, K.~Murata, H.~Ishihara and J.~Soda,
  Phys.\ Rev.\ D {\bf 77}, 064015 (2008)
  Erratum: [Phys.\ Rev.\ D {\bf 96}, no. 8, 089902 (2017)]
  [arXiv:0712.4202 [hep-th]].

\end{thebibliography}
\end{document}